\DeclareMathOperator{\Tr}{Tr}
\begin{document}

\title{Generation of Universal Linear Optics by Any Beamsplitter}
\author{Adam Bouland\thanks{%
MIT. \ email: adam@csail.mit.edu.} \and Scott Aaronson\thanks{%
MIT. \ email: aaronson@csail.mit.edu.}}
\date{}
\maketitle

\begin{abstract}
In 1994, Reck \emph{et al}.\ showed how to realize any unitary transformation on a
single photon using a product of beamsplitters and phaseshifters. \ Here we
show that any \textit{single} beamsplitter that nontrivially mixes two
modes, also densely generates the set of unitary transformations (or
orthogonal transformations, in the real case) on the single-photon subspace
with $m\geq 3$ modes. \ (We prove the same result for any two-mode real
optical gate, and for any two-mode optical gate combined with a generic
phaseshifter.) \ Experimentally, this means that one does not need tunable
beamsplitters or phaseshifters for universality: any nontrivial beamsplitter
is universal for linear optics. \ Theoretically, it means that one cannot
produce ``intermediate" models of linear optical computation (analogous to
the Clifford group for qubits) by restricting the allowed beamsplitters and
phaseshifters: there is a dichotomy; one either gets a trivial set or else a
universal set. \ No similar classification theorem for gates acting on
qubits is currently known. \ We leave open the problem of classifying
optical gates that act on three or more modes.
\end{abstract}

\newtheorem{theorem}{Theorem}[section] %
\newtheorem{corollary}[theorem]{Corollary} \newtheorem{lemma}[theorem]{Lemma}
\newtheorem{prop}[theorem]{Proposition} %
\newtheorem{conj}[theorem]{Conjecture} %
\newtheorem{definition}[theorem]{Definition} %
\newtheorem{ex}[theorem]{Example} \newtheorem{rmk}[theorem]{Remark} %
\newtheorem{alg}[theorem]{Algorithm} \newtheorem{lem}[theorem]{Lemma} %
\newtheorem{claim}[theorem]{Claim} \newtheorem{cor}[theorem]{Corollary} %
\newtheorem{defn}{Definition}

%\newenvironment{proof}[1][\emph{Proof:}]{\noindent\textbf{#1} }{\ }
%\rule{0.5em}{0.5em}

%\tableofcontents

\section{Introduction}

Universal quantum computers have proved difficult to build. \ As one
response, researchers have proposed limited models of quantum computation,
which might be easier to realize. \ Three examples are the one clean qubit
model of Knill and Laflamme \cite{Knill1998}, the commuting Hamiltonians
model of Bremner, Jozsa, and Shepherd \cite{BJS}, and the boson sampling
model of Aaronson and Arkhipov \cite{Aaronson2010}. \ None of these models
are known or believed to be capable of universal quantum computation (or,
depending on modeling details, even universal \textit{classical}
computation). \ But all of them can perform certain estimation or sampling
tasks for which no polynomial-time classical algorithm is known.

One obvious way to define a limited model of quantum computation is to
restrict the set of allowed gates. \ However, almost every gate set is
universal \cite{Lloyd1995}, and so are most ``natural" gate sets. \ For
example, Controlled-NOT together with any real one-qubit gate that does not
square to the identity is universal \cite{Shi2002}. \ As a result, very few
nontrivial examples of non-universal gate sets are known. \ All known
non-universal gate sets on $O(1)$ qubits, such as the Clifford group \cite%
{Gottesman2008}, are efficiently classically simulable, if the input and
measurement outcomes both belong to an appropriately chosen qubit basis%
\footnote{%
But not necessarily otherwise! \ For instance, suppose that a nonuniversal
gate set $G$ is efficiently simulable if inputs and outputs are in the
computational basis. \ Now conjugate $G$ by a change of qubit basis to
obtain a gate set $G^{\prime }$. \ Clearly $G^{\prime }$ is efficiently
classically simulable in the new qubit basis. \ However, it is unclear how
to simulate the gates $G^{\prime }$ if inputs and outputs are in the
computational basis. \ Along these lines, there is evidence that Clifford
gates \cite{Jozsa2013}, permutation gates \cite{Jordan}, and even diagonal
gates \cite{BJS} can be hard to simulate in arbitrary bases.}.\ As a result,
it is tempting to conjecture that there does not \textit{exist} such an
intermediate gate set: or more precisely, that any gate set on $O(1)$ qubits
is either efficiently classical simulable (with appropriate input and output
states), or else universal for quantum computing. \ Strikingly, this
dichotomy conjecture remains open even for the special case of 1- and
2-qubit gates! \ We regard proving or disproving the conjecture as an
important open problem for quantum computing theory.

In this paper, we prove a related conjecture in the quantum linear optics
model. \ In quantum optics, the Hilbert space is not built up as a tensor
product of qubits; instead it's built up as a direct \textit{sum} of optical
modes. \ An optical \textit{gate} is then just a unitary transformation that
acts nontrivially on $O(1)$ of the modes, and as the identity on the rest. \
Whenever we have a $k$-mode gate, we assume that we can apply it to any
subset of $k$ modes (in any order), as often as desired. \ The most common
optical gates considered are \textit{beamsplitters}, which act on two modes
and correspond to a $2 \times 2$ unitary matrix with determinant $-1$;%
\footnote{%
Some references use a different convention and assume that beamsplitters
have determinant $+1$ \cite{Nielson2000}. \ Note that these two conventions
are equivalent if one assumes that one can permute modes, i.e.\ apply the
matrix $\bigl(%
\begin{smallmatrix}
0 & 1 \\
1 & 0%
\end{smallmatrix}
\bigr)$ which has determinant $-1$.} and \textit{phaseshifters}, which act
on one mode and simply apply a phase $e^{i\theta}$. \ Note that any unitary
transformation acting on the one-photon Hilbert space automatically gets
``lifted," by homomorphism, to a unitary transformation acting on the
Hilbert space of $n$ photons. \ Furthermore, every element of the \emph{$n$%
-photon linear-optical group}---that is, every $n$-photon unitary
transformation achievable using linear optics---arises in this way (see \cite{Aaronson2010}, Sec. III for details). \ Of course, if $n \ge 2$, then
there are also $n$-photon unitaries that cannot be achieved
linear-optically: that is, the $n$-photon linear-optical group is a proper
subgroup of the full unitary group on the $n$-photon Hilbert space.

We call a set of optical gates $S$ \textit{universal} on $m$ modes if it
generates a dense subset of either $SU(m)$ (in the complex case) or $SO(m)$
(in the real case). \ To clarify, if $S$ is universal, this does \emph{not}
mean that linear optics with $S$ is universal for quantum computing! \ It
only means that $S$ densely generates the one-photon linear-optical
group---or equivalently, the $n$-photon linear-optical group for any value
of $n$. \ The latter kind of universality is certainly \emph{relevant} for
quantum computation: first, it already suffices for the boson sampling
proposal of Aaronson and Arkhipov \cite{Aaronson2010}; and second, if the
single resource of adaptive measurements is added, then universal linear
optics becomes enough for universal quantum computation, by the famous
result of Knill, Laflamme, and Milburn (KLM) \cite{Knill2001}. \ On the
other hand, if we wanted to map a $k$-qubit Hilbert space \emph{directly}
onto an $m$-mode linear-optical Hilbert space, then as observed by Cerf,
Adami and Kwiat \cite{Cerf1998}, we would need $m \ge 2^k$ just for
dimension-counting reasons.

Previously, Reck \emph{et al}.\ \cite{Reck1994} showed that the set of \emph{all}
phaseshifters and all beamsplitters is universal for linear optics, on any
number of modes. \ Therefore it is natural to ask: is there \emph{any} $S$
set of beamsplitters and phaseshifters that generates a nontrivial set of
linear-optical transformations, yet that still falls short of generating
\textit{all} of them? \ Here by \textquotedblleft
nontrivial,\textquotedblright\ we simply mean that $S$ does \emph{something}
more than permuting the modes around or adding phases to them.

If such a set $S$ existed, we could then ask the \emph{further} question of
whether the $n$-photon subgroup generated by $S$ was

\begin{enumerate}
\item[(a)] efficiently simulable using a classical computer, despite being
nontrivial (much like the Clifford group for qubits),

\item[(b)] already sufficient for applications such as boson sampling and KLM,
despite not being the full $n$-photon linear-optical group, or

\item[(c)] of \textquotedblleft intermediate\textquotedblright\ status,
neither sufficient for boson sampling and KLM nor efficiently simulable
classically.
\end{enumerate}

The implications for our dichotomy conjecture would of course depend on the
answer to that further question.

In this paper, however, we show that the further question never even arises,
since \textit{no such set }$S$\textit{\ exists}.\ \ Indeed, any beamsplitter
that acts nontrivially on two modes is universal on three or more modes. \
What makes this result surprising is that it holds \textit{even if the
beamsplitter angles are all rational multiples of }$\pi $\textit{.} \ A
priori, one might guess that by restricting the beamsplitter angles to (say)
$\pi /4$, one could produce a linear-optical analogue of the Clifford group;
but our result shows that one cannot.

Our proof uses standard representation theory and the classification of
closed subgroups of $SU(3)$ \cite{Fairbairn1964,Hanany1999, GrimusLudl2012}. \ From an
experimental perspective, our result shows that any complex nontrivial
beamsplitter suffices to create any desired optical network. \ From a
computational complexity perspective, it implies a dichotomy theorem for
optical gate sets: any set of beamsplitters or phaseshifters generates a set
of operations that is either \textit{trivially} classically simulable (even
on $n$-photon input states), or else universal for quantum linear optics. \
In particular, any nontrivial beamsplitter can be used to perform
boson sampling; there is no way to define an \textquotedblleft
intermediate\textquotedblright\ model of boson sampling\footnote{%
Here by \textquotedblleft intermediate,\textquotedblright\ we mean
computationally intermediate between classical computation and universal
boson sampling.} by restricting the allowed beamsplitters and phaseshifters.

Note that our result holds only for beamsplitters, i.e., optical gates that
act on two modes and have determinant $-1$. \ We leave as an open problem
whether our result can be extended to arbitrary two-mode gates, or to gates
that act on three or more modes.

Our work is the first that we know of to explore limiting the power of
quantum linear optics by limiting the gate set. \ Previous work has
considered varying the available input states and measurements. \ For
example, as mentioned earlier, Knill, Laflamme, and Milburn \cite{Knill2001}
showed that linear optics with adaptive measurements is universal for
quantum computation. \ Restricting to nonadaptive measurements seems to
reduce the computational power of linear optics, but Aaronson and Arkhipov
\cite{Aaronson2010} gave evidence that the resulting model is still
impossible to simulate efficiently using a classical computer. \ If Gaussian
states are used as inputs and measurements are taken in the Gaussian basis
only, then the model is efficiently simulable classically \cite{Bartlett2003}%
; but with Gaussian-state inputs and \textit{photon-number} measurements,
there is recent evidence for computational hardness.\footnote{%
See http://www.scottaaronson.com/blog/?p=1579}

We hope that this work will serve as a first step toward proving the
dichotomy conjecture for \emph{qubit}-based quantum circuits (i.e., the
conjecture that every set of gates is either universal for quantum
computation or else efficiently classically simulable). \ The tensor product
structure of qubits gives rise to a much more complicated problem than the
direct sum structure of linear optics. \ For that reason, one might expect
the linear-optical \textquotedblleft model case\textquotedblright\ to be
easier to tackle first, and the present work confirms that expectation.

\section{Background and Our Results}

In a linear optical system with $m$ modes, the state of a photon is
described by a vector $\mbox{$ | \psi \rangle $}$ in an $m$-dimensional
Hilbert space. \ The basis states of the system are represented by strings $%
\mbox{$ | s_1, s_2 \ldots s_m \rangle $}$ where $s_i\in\{0,1\}$ denotes the
number of photons in the $i^{\mathrm{th}}$ mode, and $\Sigma_{j=1}^{m} s_j$
is the total number of photons (in this case, one). \ For example a one%
-photon, three-mode system has basis states $\mbox{$ | 100 \rangle $},\mbox{$
| 010 \rangle $}$ and $\mbox{$ | 001 \rangle $}$.

A $k$-local gate $g$ is a $k \times k$ unitary matrix which acts on $k$
modes at a time while acting in direct sum with the identity on the
remaining $m-k$ modes. \ A \emph{beamsplitter} $b$ is a two-local gate with
determinant $-1$. \ Therefore any beamsplitter has the form $b=\left(%
\begin{smallmatrix}
\alpha & \beta^* \\
\beta & -\alpha^*%
\end{smallmatrix}
\right)$ where $|\alpha|^2 + |\beta|^2 = 1$. \ Let $b_{ij}$ denote the
matrix action of applying the beamsplitter to modes $i$ and $j$ of a
one-photon system. \ For example, if $m=3$, we have that
\begin{equation*}
\begin{matrix}
b_{12}= \left(%
\begin{matrix}
\alpha & \beta^* & 0 \\
\beta & -\alpha^* & 0 \\
0 & 0 & 1%
\end{matrix}
\right) &  & b_{31}= \left(%
\begin{matrix}
-\alpha^* & 0 & \beta \\
0 & 1 & 0 \\
\beta^* & 0 & \alpha%
\end{matrix}
\right)%
\end{matrix}%
\end{equation*}
when written in the computational basis. \ A beamsplitter is called \emph{%
nontrivial} if $|\alpha|\neq 0$ and $|\beta|\neq 0$, i.e.\ if the
beamsplitter mixes modes.

We say that a set $S$ of optical gates \emph{densely generates} a continuous
group $G$ of unitary transformations, if the group $H$ generated by $S$ is a
dense subgroup of $G$ (that is, if $H\leq G$ and $H$ contains arbitrarily
close approximations to every element of $G$). \ Then we call $S$ \emph{%
universal on $m$ modes} if it densely generates $SU(m)$ or $SO(m)$ when
acting on $m$ modes. \ (Due to the irrelevance of global phases, this is
physically equivalent to generating $U(m)$ or $O(m)$ respectively.) \ In
this definition we are assuming that whenever we have a $k$-mode gate in $S$%
, we can apply it to any subset of $k$ modes (in any order), as often as
desired. \ Note that we consider real $SO(m)$ evolutions to be universal as
well; this is because the distinction between real and complex optical
networks is mostly irrelevant\footnote{%
The one case we know about where the real vs. complex distinction might matter
is when using error-correcting codes. \ There, applying all possible
orthogonal transformations to the physical modes or qubits might not suffice to
apply all orthogonal transformations to the encoded modes or qubits. \ This
could conceivably be an issue, for example, in the scheme of Gottesman,
Kitaev, and Preskill \cite{GottesmanKitaevPreskill} for universal quantum
computing with linear optics.} to computational applications of linear
optics, such as the KLM protocol \cite{Knill2001} and boson sampling \cite%
{Aaronson2010}.

A basic result in quantum optics, proved by Reck \emph{et al}.\ \cite{Reck1994},
says that the collection of all beamsplitters and phaseshifters is
universal. \ Specifically, given any target unitary $U$ on $m$ modes, there
exists a sequence of $O(m^{2})$ beamsplitters and phaseshifters whose
product is exactly $U$. \ Reck \emph{et al}.'s proof also shows an analogous result
for real beamsplitters - namely, that any orthogonal matrix $O$ can be
written as the product of $O(m^{2})$ real beamsplitters. \ Furthermore, it
can easily be shown that there exist two beamsplitters $b$, $b^{\prime }$
whose products densely generate $O(2)$. \ Therefore $b$ and $b^{\prime }$
can be used to simulate any real beamsplitter, and hence by Reck \emph{et al}.\
\cite{Reck1994}, the set $\{b,b^{\prime }\}$ is universal for linear optics.

In this paper, we consider the universality of a \emph{single} beamsplitter $%
b$. \ If $b$ is trivial, then on $m$ modes the matrices $b_{ij}$ generates a
subgroup of $P_{m}$, the set of $m\times m$ unitary matrices with all
entries having norm zero or one. \ This is obviously non-universal, and the
state evolutions on any number of photons are trivial to simulate
classically. \ Our main result is that any nontrivial beamsplitter densely
generates either all orthogonal transformations on three modes (in the real
case), or all unitary transformations on three modes (in the complex case). \
From this, it follows easily from Reck \emph{et al}.\ \cite{Reck1994} that such a
beamsplitter is also universal on $m$ modes for any $m\geq 3$.

\begin{theorem}\label{o3_generation}
Let $b$ be any nontrivial beamsplitter. \ Then the set $S=\{b_{12},b_{13}, b_{23}\}$, obtained by applying $b$ to all possible pairs among three photon modes,\footnote{Technically, we could also consider the unitaries $b_{21},b_{31},b_{32}$, obtained by applying $b$ to the same pairs of modes but reversing their order. \ However, this turns out not to give us any advantage.} densely generates either $SO(3)$ (if all entries of $b$ are real) or $SU(3)$ (if any entry of $b$ is non-real).
\end{theorem}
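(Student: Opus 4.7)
The plan is to analyze the closure $\overline{H}$ of the group $H = \langle b_{12}, b_{13}, b_{23}\rangle$ inside $U(3)$ (or inside $O(3)$ when $b$ is real). Since a closed subgroup of a compact Lie group is itself a compact Lie group, the task reduces to identifying $\overline{H}$. The strategy is to first establish that $\overline{H}$ is irreducible and positive-dimensional, and then invoke the classification of closed subgroups of $SU(3)$ (respectively $SO(3)$) to conclude that $\overline{H}$ equals the full group claimed.

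First I would verify irreducibility: if $W$ were a proper nonzero subspace invariant under $\overline{H}$, then each $b_{ij}$ would preserve $W$. Since $b_{ij}$ fixes the axis $e_k$ (for $k \notin \{i,j\}$) and nontrivially mixes $e_i, e_j$ (because $|\alpha|, |\beta| \neq 0$), a short case analysis on $\dim W \in \{1, 2\}$ forces a contradiction, so $\overline{H}$ acts irreducibly on $\mathbb{C}^3$ (or $\mathbb{R}^3$).

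Next I would show that $\overline{H}$ has positive dimension, equivalently that $H$ is infinite. The natural tactic is to exhibit a specific word in the generators, such as $b_{12}b_{23}$ or a commutator thereof, whose eigenvalues have arguments that are irrational multiples of $\pi$; such an element has infinite order, and its orbit closure is a one-parameter subgroup of $\overline{H}$. Computing the trace of the chosen word as a function of $\alpha, \beta$ should show that, apart from the already-excluded cases where $b$ merely permutes or rephases modes, the trace is incompatible with the element being of bounded order.

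Finally, I would close the argument using the classification of closed subgroups. In the real case, the connected closed subgroups of $SO(3)$ are $\{e\}$, $SO(2)$, and $SO(3)$, of which only $SO(3)$ acts irreducibly on $\mathbb{R}^3$; combined with ruling out the finite irreducible subgroups (the rotation groups of the Platonic solids and their relatives $A_4, S_4, A_5$) via the previous step, this forces $\overline{H}^0 = SO(3)$, hence $\overline{H} = SO(3)$. In the complex case, I would appeal to the classification of closed subgroups of $SU(3)$ due to Fairbairn et al.\ \cite{Fairbairn1964}, refined by Hanany--He \cite{Hanany1999} and Grimus--Ludl \cite{GrimusLudl2012}: the closed irreducible subgroups form an explicit list of finite groups together with $SU(3)$ itself and the $SO(3)$ embedded via the symmetric square of the defining representation. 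Positive-dimensionality narrows $\overline{H}^0$ to $SO(3)$ or $SU(3)$, and the $SO(3)$ case is then ruled out by the presence of a non-real entry in some generator.

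The main technical obstacle is ruling out the finite irreducible subgroups of $SU(3)$. This list is long and contains sporadic groups (for instance $\Sigma(168)$, $\Sigma(216)$, and the Hessian group) together with several infinite families, so producing a single word in the $b_{ij}$ whose trace or eigenvalue structure is incompatible with every group on the list is where the real work lies. I expect this to require a uniform bound on the orders and traces of elements across the classified subgroups, followed by an explicit high-power calculation showing that some word in the $b_{ij}$ exceeds it for every nontrivial $b$.
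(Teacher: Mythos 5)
Your skeleton (irreducibility, then infiniteness, then the classification of closed subgroups of $SU(3)$) matches the paper's, but the step where you propose to prove infiniteness would fail as described. You want either a single word in the $b_{ij}$ with an eigenvalue of irrational angle, or ``a uniform bound on the orders and traces of elements across the classified subgroups'' that some word violates. No such uniform bound exists: the classification includes the infinite families of dihedral-like groups $\Delta(3n^2)$ and $\Delta(6n^2)$ (and the further C- and D-series subgroups identified by Ludl), in which element orders are unbounded in $n$ and the possible character values of three-dimensional irreps (sums of three $n$th roots of unity) become dense in a region of the plane as $n\to\infty$. So no finite trace computation on a fixed word can be ``incompatible with every group on the list.'' The paper's way past this is structural rather than metric: it shows that none of the generators can be traceless and at most one type among $B,D,F$ can occur (Lemma 3.3), forcing the generators into conjugacy classes of type $A$ plus a single other type, and then reads off from the group multiplication table that such elements generate only a proper subgroup --- contradicting irreducibility. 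That conjugacy-class argument is the missing idea; without it, the infinite families are not eliminated. (Relatedly, a fixed word with eigenvalues of irrational angle cannot exist uniformly in $b$, since for suitable nontrivial $b$ with all angles rational multiples of $\pi$ any fixed word can have finite order; the theorem's force is precisely that universality holds anyway.)

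A second, smaller gap is in your final step: the closed irreducible subgroups of $SU(3)$ are not exhausted by the finite list together with $SO(3)$ and $SU(3)$. There are also the infinite disconnected monomial groups $\Delta(3\infty)$ and $\Delta(6\infty)$, which are irreducible and positive-dimensional but whose identity component is a torus, so ``positive-dimensional and irreducible'' does not yet force $\overline{H}^{\,0}\in\{SO(3),SU(3)\}$. The paper rules these out by observing that its $\Delta(3n^2)/\Delta(6n^2)$ argument never used finiteness of $n$ and hence applies in the limit. You would need the same (or an equivalent) extra step. Finally, note the paper works with the determinant-$1$ products $R_i=b_{ij}b_{kl}$ so that the $SU(3)$ classification applies directly, recovering the determinant-$-1$ elements afterwards; your $H$ lives in $U(3)\setminus SU(3)$ in part, which is easily repaired but should be said.
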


\begin{cor} Any nontrivial beamsplitter is universal on $m\geq 3$ modes. \label{cor:universality}
\end{cor}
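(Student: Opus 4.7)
The plan is to bootstrap Theorem \ref{o3_generation} from three modes up to $m$ modes by combining it with the Reck \emph{et al.} decomposition. The key observation is that for any three distinguished modes $i,j,k$ of an $m$-mode system, the gates $b_{ij}$, $b_{ik}$, $b_{jk}$ act as identity on the remaining $m-3$ modes, so Theorem \ref{o3_generation} (applied to the three-dimensional subspace spanned by those modes) guarantees that their products densely generate $SU(3)$ (respectively $SO(3)$) on the $\{i,j,k\}$ block while fixing every other mode.

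The first step will be to extract, as a special case, that we can approximate arbitrary two-mode beamsplitters and arbitrary one-mode phaseshifters on the $m$-mode system. Given a target beamsplitter on modes $(i,j)$, I pick any third mode $k$ and approximate the embedded $2 \times 2$ unitary (extended by the identity on mode $k$) to within $\epsilon'$ using a word in $\{b_{ij},b_{ik},b_{jk}\}$; similarly any phaseshifter $\mathrm{diag}(1,\ldots,1,e^{i\theta},1,\ldots,1)$ arises as the $SU(3)$ element $\mathrm{diag}(e^{i\theta},1,1)$ on some three-mode block.

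Next I will invoke the result of Reck \emph{et al.}\ \cite{Reck1994}: any target $U \in U(m)$ can be written \emph{exactly} as a product of $N = O(m^2)$ beamsplitters and phaseshifters $g_1 g_2 \cdots g_N$ (and analogously in the real case for $O \in O(m)$ using real beamsplitters). Replacing each $g_\ell$ by an $\epsilon'$-approximation $\tilde{g}_\ell$ produced as above, the telescoping identity together with submultiplicativity of the operator norm yields $\|\tilde{g}_1 \cdots \tilde{g}_N - g_1 \cdots g_N\| \le N\epsilon'$. Choosing $\epsilon' = \epsilon/N$ gives an $\epsilon$-approximation to $U$ by a word in $\{b_{ij}\}_{1 \le i \neq j \le m}$, establishing density.

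There is essentially no serious obstacle here; the only thing to double-check is the real/complex dichotomy (if $b$ is real, Theorem \ref{o3_generation} only hands us $SO(3)$, hence real beamsplitters and no nontrivial phaseshifters, but Reck \emph{et al.}\ gives a product formula for $O(m)$ using only real beamsplitters, so the argument goes through in parallel). The mild bookkeeping point is that the $N\epsilon'$ bound uses the fact that each $\tilde g_\ell$ and $g_\ell$ are unitaries of norm $1$, which is immediate.
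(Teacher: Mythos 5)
Your overall strategy is the same as the paper's: use Theorem \ref{o3_generation} on triples of modes and then bootstrap to $m$ modes via the Reck \emph{et al}.\ \cite{Reck1994} decomposition, with a telescoping bound to control the accumulated error (the paper states this last part less explicitly than you do). However, there is one concretely false step in your write-up: a phaseshifter restricted to a three-mode block is $\mathrm{diag}(e^{i\theta},1,1)$, which has determinant $e^{i\theta}\neq 1$ and is therefore \emph{not} an element of $SU(3)$; likewise a target beamsplitter $b'$ (determinant $-1$) extended by the identity on a third mode has determinant $-1$ and also lies outside $SU(3)$. Since Theorem \ref{o3_generation} only hands you (a dense subgroup of) $SU(3)$ or $SO(3)$ on the block, neither of the two gate types appearing in the Reck \emph{et al}.\ factorization of a general $U\in U(m)$ is directly approximable the way you claim, and your substitution step breaks down exactly at the determinant bookkeeping --- which the paper itself flags (in its Open Questions section) as the genuinely delicate point in linear optics.

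The gap is repairable, and the paper's proof shows one way: since the generating set also contains $b_{12}$ itself, which has determinant $-1$, multiplying the dense image of $SO(3)$ (resp.\ $SU(3)$) by $b_{12}$ reaches the determinant-$(-1)$ coset, so one does obtain every \emph{real beamsplitter} on two of the three modes, and then invokes the beamsplitter-only (phaseshifter-free) form of Reck \emph{et al}.\ for orthogonal matrices. An alternative repair for the complex case that avoids phaseshifters entirely: each $SU(3)$ block contains the determinant-one $SU(2)$ subgroups on its three pairs of modes, and the collection of all such two-level $SU(2)$ blocks over all pairs already generates $SU(m)$ by the standard Givens-rotation decomposition (the leftover diagonal has determinant $1$ and factors into $\mathrm{diag}(e^{i\theta},e^{-i\theta})$ blocks, each of which \emph{is} special unitary). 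Either fix preserves your telescoping argument; as written, though, the proof asserts membership in $SU(3)$ of matrices that are not in $SU(3)$.
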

\begin{proof}

By Theorem \ref{o3_generation}, the set $S=\{b_{12},b_{13}, b_{23}\}$ densely generates all orthogonal matrices with determinant $1$. \ But since $b$ has determinant $-1$, we know that $S$ must generate all orthogonal matrices with determinant $-1$ as well.\footnote{Indeed any orthogonal $O$ with determinant $-1$ can be written as $O=b_{12}^{-1}O' = b_{12}O'$ for some $O'$ of determinant $1$.} \ Therefore, $S$ densely generates the action of any real beamsplitter $b'$ acting on two out of three modes. \ So by Reck et al \cite{Reck1994}, $S$ also densely generates all orthogonal matrices on $m$ modes for $m\geq 3$.
\end{proof}

Note that, although our proof of universality on three modes is
nonconstructive, by the Solovay-Kitaev Theorem \cite{Dawson2005}, there is
an efficient algorithm that, given any target unitary $U$, finds a sequence
of $b$'s approximating $U$ up to error $\varepsilon$ in $O\left(\log^{3.97}(%
\frac{1}{\varepsilon})\right)$ time. \ Thus, our universality result also
implies an efficient algorithm to construct any target unitary using
beamsplitters in the same manner as Reck \emph{et al}.\ \cite{Reck1994}.

We now proceed to a proof of Theorem \ref{o3_generation}.

\section{Proof of Main Theorem}

We first consider applying a fixed beamsplitter

\begin{equation*}
b=\left(%
\begin{matrix}
\alpha & \beta^* \\
\beta & -\alpha^*%
\end{matrix}
\right),
\end{equation*}

\noindent where $\alpha $ and $\beta $ are complex and $|\alpha |^{2}+|\beta |^{2}=1$,
to two modes of a three-mode optical system. \ We take pairwise products of
these beamsplitter actions to generate three special unitary matrices. \ These
three unitaries densely generate some group of matrices $G\leq SU(3)$. \ We
then use the representation theory of subgroups of $SU(3)$ described in the
work of Fairbairn, Fulton \& Klink \ \cite{Fairbairn1964}, Hanany \&
He \cite{Hanany1999}, and Grimus \& Ludl \cite{GrimusLudl2012} to show that the beamsplitter must generate either all $
SO(3)$ matrices (if the beamsplitter is real) or all $SU(3)$ matrices (if
the beamsplitter has a complex entry).

Consider applying our beamsplitter to a three-mode system. \ Let $R_1,R_2,R_3$
be defined as the pairwise products of the beamsplitter actions below:
\begin{equation*}
\begin{matrix}
R_1= b_{12}b_{13}=\left(
\begin{matrix}
\alpha^2 & \beta^* & \alpha\beta^* \\
\alpha\beta & -\alpha^* & |\beta|^2 \\
\beta & 0 & -\alpha^*%
\end{matrix}
\right), & R_2 = b_{23}b_{13}=\left(
\begin{matrix}
\alpha & 0 & \beta^* \\
|\beta|^2 & \alpha & -\alpha^*\beta^* \\
-\alpha^*\beta & \beta & \alpha^{*2}%
\end{matrix}
\right),%
\end{matrix}
\end{equation*}
\begin{equation*}
R_3 = b_{12}b_{23}=\left(
\begin{matrix}
\alpha & \alpha\beta^* & \beta^{*2} \\
\beta & -|\alpha|^2 & -\alpha^*\beta^* \\
0 & \beta & -\alpha^*%
\end{matrix}
\right).
\end{equation*}

Since $R_1, R_2, R_3$ are even products of matrices of determinant $-1$,
they are all elements of $SU(3)$. \ Let $G \le SU(3)$ be the subgroup
densely generated by products of the elements $\{R_1,R_2,R_3\}$ and their inverses\footnote{Since $b_{ij}^{-1}=b_{ij}$, the beamsplitter is capable of generating the inverses of the $R_i$ as well.}. \ Let $G_M$ be the set of
matrices representing $G$ under this construction. \ First we will show that
these matrices $G_M$ form an irreducible representation of $G$.

\begin{claim}\label{clm:irrep}
The set $\{R_1, R_2, R_3\}$ generates an irreducible three-dimensional representation of $G$.
\end{claim}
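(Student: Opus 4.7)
Since $R_1, R_2, R_3$ are unitary, the generated 3-dimensional representation is reducible if and only if it admits a common 1-dimensional invariant subspace: any 2-dimensional invariant subspace has a 1-dimensional invariant orthogonal complement. My plan is therefore to assume for contradiction that $R_1, R_2, R_3$ share a common eigenvector $v = (a, b, c)^T \neq 0$ with unimodular eigenvalues $\lambda_1, \lambda_2, \lambda_3$, and derive a contradiction with the nontriviality of $b$.

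The key observation is that each of $R_1, R_2, R_3$ has exactly one off-diagonal zero, which yields an especially clean linear relation from the eigenvalue equation. Row 3 of $R_1$ gives $\beta a = (\lambda_1 + \alpha^*) c$; row 3 of $R_3$ gives $\beta b = (\lambda_3 + \alpha^*) c$; and row 1 of $R_2$ gives $\beta^* c = (\lambda_2 - \alpha) a$. Using only these three relations together with $\beta \neq 0$, I can rule out any vanishing coordinate of $v$ (for instance $c = 0$ forces $a = 0$ via the first relation and then $b = 0$ via the second). The three relations then determine $v$ uniquely up to scale in terms of $\alpha, \beta$, and the $\lambda_i$, and multiplying the first and third yields the consistency constraint $|\beta|^2 = (\lambda_1 + \alpha^*)(\lambda_2 - \alpha)$.

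The remaining six rows of $R_1, R_2, R_3$ must then also hold on this $v$, giving an over-determined polynomial system in $\alpha, \alpha^*, \beta, \beta^*, \lambda_1, \lambda_2, \lambda_3$. Together with the unimodularity $|\lambda_i| = 1$ and the beamsplitter normalization $|\alpha|^2 + |\beta|^2 = 1$, I plan to eliminate the $\lambda_i$ from this system and simplify until the remaining identity forces $\alpha\beta = 0$, contradicting nontriviality. A good next step is row 2 of $R_2$, which after substitution yields $(1 - \alpha^* \lambda_2) a = (\lambda_2 - \alpha) b$, giving a linear relation between $\lambda_1$ and $\lambda_3$ that can be combined with the two top-row equations of $R_1$ and $R_3$ to carry out the elimination.

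The main obstacle is the bookkeeping of degenerate subcases that must be peeled off before the generic elimination is valid: if any of $\lambda_1 + \alpha^*$, $\lambda_3 + \alpha^*$, or $\lambda_2 - \alpha$ vanishes, unimodularity of $\lambda_i$ forces $|\alpha| = 1$ and hence $\beta = 0$, contradicting nontriviality directly; and degenerate branches in the resulting polynomial identities must be shown to also collapse to $\alpha = 0$ or $\beta = 0$ rather than producing new solutions. Apart from this case analysis, the computation is routine polynomial algebra, and the over-determination of the system (nine scalar equations versus essentially four complex unknowns $a : b : c$ and $\lambda_i$) is what makes the contradiction inevitable once the nontriviality hypothesis is used.
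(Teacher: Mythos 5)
Your reduction of irreducibility to the non-existence of a common eigenvector is a legitimate alternative to the paper's route: the paper instead computes the commutant (assumes a matrix $U$ commutes with $R_1,R_2,R_3$ and shows $U$ is scalar, i.e.\ Schur's lemma), whereas you exploit unitarity directly to reduce to a shared $1$-dimensional invariant subspace. Your preliminary deductions are all correct: the three ``clean'' rows do give $\beta a=(\lambda_1+\alpha^*)c$, $\beta b=(\lambda_3+\alpha^*)c$, $\beta^* c=(\lambda_2-\alpha)a$; these do force every coordinate of $v$ to be nonzero (the $b=0$ branch needs the observation that $\lambda_3+\alpha^*=0$ would force $|\alpha|=1$, which you have); and the consistency constraint $|\beta|^2=(\lambda_1+\alpha^*)(\lambda_2-\alpha)$ and the relation $(1-\alpha^*\lambda_2)a=(\lambda_2-\alpha)b$ are both right.

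The genuine gap is that the heart of the proof---showing the resulting system is actually inconsistent for every nontrivial beamsplitter---is never carried out; it is replaced by the assertion that over-determination makes the contradiction ``inevitable'' and that the final identity will force $\alpha\beta=0$. Over-determination is not an argument here: the very same over-determined system \emph{does} have solutions when $\alpha=0$ or $\beta=0$, so the inconsistency is not generic, and the nontriviality hypothesis must enter through specific, non-obvious algebra. The paper's analogous computation is instructive about how delicate this is: the contradiction there is not of the form $\alpha\beta=0$ at all; rather, one is driven into a branch where $\alpha$ is forced to be pure imaginary and $\beta=1-\alpha$, which only then contradicts $|\alpha|^2+|\beta|^2=1$. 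Your elimination will have to reproduce something of this kind, including the peeling-off of branches where special algebraic relations among $\alpha$, $\beta$, and the $\lambda_i$ hold, and there is no guarantee those branches ``collapse to $\alpha=0$ or $\beta=0$'' as you predict. Until that elimination is actually performed (with the unimodularity of the $\lambda_i$ and the normalization used explicitly), the proposal is a correct setup plus a plausible plan, not a proof.
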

\begin{proof}
Suppose that some matrix
\[
U=\left(\begin{matrix}
A & D & G\\
B & E & H\\
C & F & I \\
\end{matrix}\right)
\]
commutes with $R_1$, $R_2$, and $R_3$. \ Then we claim that $U$ is a constant multiple of the identity, i.e.\ $A=E=I$ and $D=G=H=B=C=F=0$.

From the claim, it follows easily that the representation is irreducible. \ Indeed, suppose the representation is reducible, so preserves a non-trivial subspace. \ Since our representation is unitary, this implies that our representation is decomposable, i.e.\ by a change of basis it can be brought into block-diagonal form.\footnote{To see the equivalence of ``reducible'' and ``decomposable'' for unitary representations, it suffices to note that, if a set of unitary matrices always map a subspace $V$ to itself, then they cannot map any vector not in $V$ to a vector in $V$, since this would violate unitarity.} \ In the new basis, the matrix consisting of 1's on the diagonal in the first block, and 2's in the diagonal of the second block, commutes with all elements of $G$, and in particular with $R_1,R_2,R_3$. \ But that matrix is not a multiple of the identity. \ Hence if only multiples of the identity commute with $R_1,R_2,R_3$, the representation must be irreducible.

We now prove the claim. \ First, since $U$ commutes with $R_1$,
\[
\left(\begin{matrix}
A & D & G\\
B & E & H\\
C & F & I \\
\end{matrix}\right)
\left(\begin{matrix}
\alpha^2 & \beta^* & \alpha\beta^* \\
\alpha\beta & -\alpha^* & |\beta|^2 \\
\beta & 0 & -\alpha^*
\end{matrix}\right)
= \left(\begin{matrix}
\alpha^2 & \beta^* & \alpha\beta^* \\
\alpha\beta & -\alpha^* & |\beta|^2 \\
\beta & 0 & -\alpha^*
\end{matrix}
\right)\left(\begin{matrix}
A & D & G\\
B & E & H\\
C & F & I \\
\end{matrix}\right).
\]
This imposes nine equations. \ Below we give the equations coming from the (1,1), (1,2), (2,2), (2,3), and (3,2) entries of the above matrices respectively.
\begin{align}
(D\alpha + G)\beta &= (C\alpha + B)\beta^* \label{eq:1}, \\
(A-E-F\alpha)\beta^* &= D(\alpha^2 +\alpha^*)  \label{eq:2},\\
B\beta^*&=D\alpha\beta+F\beta\beta^*  \label{eq:5},\\
B\alpha\beta^*+E\beta\beta^*-H\alpha^* &= G\alpha\beta -H\alpha^* +I\beta\beta^*  \label{eq:6},\\
C\beta^*&=D\beta  \label{eq:8}.
\end{align}

Note that Eqs. (\ref{eq:8}) and (\ref{eq:1}) imply that
\begin{equation}\label{eq:10}G\beta = B\beta^* .\end{equation}
So by Eq. (\ref{eq:6}) we have
\begin{equation} \label{eq:11}E\beta\beta^* = I\beta\beta^* .\end{equation}
So since $0<|\beta|<1$, we have $I=E$.

In total so far we have $I=E$, $G\beta = B\beta^*$ and $C\beta^*= D\beta$.

Next, since $U$ commutes with $R_2$,
\[
\left(\begin{matrix}
A & D & G\\
B & E & H\\
C & F & E \\
\end{matrix}\right)
\left(
\begin{matrix}
\alpha & 0 & \beta^* \\
|\beta|^2 & \alpha & -\alpha^*\beta^* \\
-\alpha^*\beta & \beta & \alpha^{*2}
\end{matrix}
\right)
= \left(
\begin{matrix}
\alpha & 0 & \beta^* \\
|\beta|^2 & \alpha & -\alpha^*\beta^* \\
-\alpha^*\beta & \beta & \alpha^{*2}
\end{matrix}
\right)\left(\begin{matrix}
A & D & G\\
B & E & H\\
C & F & E \\
\end{matrix}\right).
\]
This imposes another nine equations. \ Here are the equations from the (1,1), (2,1) and (2,2) entries respectively, which we have simplified using $I=E$, $G\beta = B\beta^*$ and $C\beta^*= D\beta$:
\begin{align}
D\beta &= D\beta\beta^* -G\alpha^*\beta   \label{eq:18},\\
E\beta\beta^*-H\alpha^*\beta &= A\beta\beta^* -C\alpha^*\beta^*  \label{eq:21},\\
H\beta &= D\beta\beta^* -F\alpha^*\beta^*  \label{eq:22}.
\end{align}

Note that Eqs. (\ref{eq:18}) and (\ref{eq:22}), combined with the fact that $G\beta = B\beta^*$, imply that $D\beta=H\beta$, and hence $D=H$.

Plugging this in to Eq. (\ref{eq:21}), we see that $E\beta\beta^*-D\alpha^*\beta = A\beta\beta^* -C\alpha^*\beta^*$. Using $C\beta^*= D\beta$ these last two terms cancel, so $E\beta\beta^* = A\beta\beta^*$, and hence $E=A$. \ So overall we have established that $A=E=I$, $D=H$, $B=F$, $G\beta = B\beta^*$ and $C\beta^*= D\beta$.

Now suppose $B=0$. \ Then we have from above that $B=F=G=0$. \ By Eq. (\ref{eq:18}) we also have $D\beta = D\beta\beta^* \Rightarrow D=0$ since $0<|\beta|<1$. \ Hence we have $C=0$ as well by the fact that $C\beta^*= D\beta$. \ Therefore $U$ is a multiple of the identity, as desired.

So it suffices to prove that $B=0$. \ Suppose $B\neq0$; then we will derive a contradiction.

Since $U$ commutes with $R_3$,
\[
\left(\begin{matrix}
A & D & G\\
B & A & D\\
C & B & A \\
\end{matrix}\right)
\left(
\begin{matrix}
\alpha & \alpha\beta^* & \beta^{*2} \\
\beta & -|\alpha|^2 & -\alpha^*\beta^* \\
0 & \beta & -\alpha^*
\end{matrix}
\right)
= \left(
\begin{matrix}
\alpha & \alpha\beta^* & \beta^{*2} \\
\beta & -|\alpha|^2 & -\alpha^*\beta^* \\
0 & \beta & -\alpha^*
\end{matrix}
\right)\left(\begin{matrix}
A & D & G\\
B & A & D\\
C & B & A \\
\end{matrix}\right).
\]
This imposes yet another nine equations, but we will only need the one coming from the (2,2) entry of the above matrices to complete the proof:
\begin{align}
B\alpha\beta^* &= -B\alpha^*\beta^* .\label{eq:27}
\end{align}

Since $B\neq 0$, Eq. (\ref{eq:27}) implies that $\alpha=-\alpha^*$, i.e.\ $\alpha$ is pure imaginary. \ Furthermore, since $G\beta = B\beta^*$, we have $G\neq0$ as well. \ Using this, we can write out Eqs. (\ref{eq:2}) and (\ref{eq:5}) as follows:
\begin{alignat}{6}
(-B\alpha)\beta^*  &= D(\alpha^2 -\alpha) &\quad \Rightarrow &\quad  G\beta &= D(1-\alpha)  \label{eq:28},\\
B\beta^*  &= D\alpha\beta+F\beta\beta^* &\quad \Rightarrow   &\quad G &= D\alpha + G\beta \label{eq:29}.
\end{alignat}

Summing these equations, we see that $G=D$. \ Plugging back into Eq. (\ref{eq:29}), we see that $\beta = 1 -\alpha$. \ Since $\alpha$ is pure imaginary this contradicts $|\alpha|^2+|\beta|^2=1$.

To summarize, if $U$ commutes with all elements of $G$, then $U$ is a multiple of the identity. \ This proves the claim and hence the theorem.
\end{proof}

We have learned that the set $G_M$ forms a three-dimensional irreducible
representation of $G$. \ We now leverage this fact, along with the
classification of finite subgroups of $SU(3)$, to show that $G$ is not
finite.

\begin{claim} \label{inf} $G$ is infinite.
\end{claim}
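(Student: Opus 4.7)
The strategy is to assume for contradiction that $G$ is finite, then derive a contradiction using the classification of finite subgroups of $SU(3)$ admitting a faithful three-dimensional irreducible representation \cite{Fairbairn1964, Hanany1999, GrimusLudl2012}. By Claim \ref{clm:irrep}, $G$ would embed as such a subgroup, and the classification exhibits these as a finite list of exceptional groups together with the infinite families $\Delta(3n^2)$, $\Delta(6n^2)$, and the related $C$- and $D$-series.

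The first step is to observe that every element of a finite subgroup of $SU(3)$ has eigenvalues that are roots of unity, so its trace is a sum of three roots of unity whose product equals $1$---a countable subset of $\mathbb{C}$. Computing the traces of our generators gives $\Tr(R_1) = \alpha^2 - 2\alpha^*$, $\Tr(R_2) = 2\alpha + (\alpha^*)^2$, and $\Tr(R_3) = \alpha - \alpha^* - |\alpha|^2$, and each must lie in the countable set above. For most $\alpha$ in the open unit disk, the condition on $\Tr(R_3)$ alone is already violated, in which case $R_3$ has infinite order and we are done.

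For the remaining countably many exceptional $\alpha$ at which this single-trace test accidentally succeeds---for instance $\alpha = \sqrt{(\sqrt{5}-1)/2}$, at which $R_3$ has order $5$ with eigenvalues $1, e^{4\pi i/5}, e^{-4\pi i/5}$---I would apply the same cyclotomic-sum test to longer words such as $R_1 R_3$, $R_1 R_2 R_3$, or the commutator $[R_1, R_2]$. Combining several trace constraints from algebraically independent expressions in $\alpha$ and $\beta$ should force $\alpha$ or $\beta$ to $0$, contradicting nontriviality.

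The main obstacle is handling the infinite families $\Delta(3n^2)$ and $\Delta(6n^2)$ uniformly in $n$, since they admit elements of arbitrarily high order and cannot be ruled out by checking finitely many trace values. My plan there is to use the structural fact that, in a suitable common basis, every element of these families is monomial---a permutation composed with a diagonal matrix of $n$-th roots of unity. Requiring $R_1, R_2, R_3$ to be simultaneously monomial in a single basis would impose very restrictive conditions on their joint eigenstructure (essentially, almost all off-diagonal entries in the common basis must vanish), which I expect to force $\alpha$ or $\beta$ to be $0$. The exceptional groups, being finite in number and of bounded order, can then be dispatched by finitely many case checks on the possible orders and characters of $R_1, R_2, R_3$.
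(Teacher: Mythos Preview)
Your high-level strategy---assume $G$ finite, invoke Claim~\ref{clm:irrep}, and rule out each candidate in the classification of finite subgroups of $SU(3)$ with a three-dimensional irrep---is exactly the paper's. But the proposal has genuine gaps at the two places where the actual work lies.

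First, the ``sum of three roots of unity'' observation and the ``for most $\alpha$'' step do no work toward a proof. You must handle \emph{every} nontrivial $\alpha,\beta$, and you yourself note that countably many exceptions remain; yet you give no mechanism for identifying or dispatching them beyond ``apply the same test to longer words,'' with no argument for which words or why finitely many would suffice. The paper never uses this generic heuristic. It goes directly to the group-by-group analysis, matching $T_1,T_2,T_3$ against the explicit finite character list of each exceptional group's three-dimensional irreps: solve for $\alpha$ from the constraint $T_3\in S$, then verify that the resulting $T_1\notin S$. That is the ``finitely many case checks'' you defer; it is the entire content for the exceptional groups, and it must actually be carried out.

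Second, your plan for the infinite C- and D-series (including $\Delta(3n^2)$ and $\Delta(6n^2)$) via simultaneous monomiality is where you diverge from the paper, and it is not correctly formulated. Monomial does not mean ``almost all off-diagonal entries vanish'': the $3$-cycle classes $C,E$ are represented by matrices with zero diagonal, and the transposition classes $B,D,F$ by matrices with a single nonzero diagonal entry. The invariant one actually extracts from the monomial structure is the \emph{trace}: type-$A$ elements have trace a sum of three roots of unity, types $C,E$ are traceless, and types $B,D,F$ have trace of unit modulus. The paper uses precisely this. Since no $T_i$ vanishes, for $\Delta(3n^2)$ (and all C-series groups) every $R_i$ must be of type $A$, hence they commute---contradiction. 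For $\Delta(6n^2)$ (and the D-series) the paper proves a nontrivial lemma (Lemma~\ref{lem:6n2}): any two $R_i$ with $|T_i|=1$ must lie in the \emph{same} transposition class, established by solving $|T_i|=|T_j|=1$ explicitly for $\alpha$ and checking that $\Tr(R_iR_j)\neq 0$; the group multiplication table then shows that type-$A$ elements together with a single transposition class cannot generate the group. Your monomiality idea points in the right direction, but without extracting these trace invariants and proving something like Lemma~\ref{lem:6n2}, there is no mechanism to obtain a contradiction uniformly in $n$.
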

\begin{proof}
By Claim \ref{clm:irrep}, if $G$ is finite then $\{R_1,R_2,R_3\}$ generates an irreducible representation of $G$. \ The finite subgroups of $SU(3)$ consist of the finite subgroups of $SU(2)$, $12$ exceptional finite subgroups, and two infinite families of ``dihedral-like" groups, whose irreducible representations are classified in \cite{Fairbairn1964,Hanany1999,GrimusLudl2012}. \ Our proof proceeds by simply enumerating the possible finite groups that $G$ could be, and showing that $\{R_1,R_2,R_3\}$ cannot generate an irreducible representation of any of them.

First we eliminate the possibility that $G$ is an exceptional finite subgroup of $SU(3)$. \ Of the $12$ exceptional subgroups, only eight of them have three-dimensional irreps: they are labeled $\Sigma(60)$, $\Sigma(60) \times \mathbb{Z}_3$, $\Sigma(168)$, $\Sigma(168) \times \mathbb{Z}_3$, $\Sigma(216)$, $\Sigma(36\times 3)$, $\Sigma(216\times 3)$, and $\Sigma(360\times 3)$. \ So by Claim \ref{clm:irrep}, if $G$ is finite and exceptional, then it is one of these eight groups.

The character tables of these groups are provided in \cite{Fairbairn1964} and \cite{Hanany1999}. \ Recall that the character of an element of a representation is the trace of its representative matrix. \ The traces of the matrices $R_1,R_2,R_3$, denoted $T_1,T_2,T_3$, are given by
\begin{align}
T_1 &=\alpha^2 - 2\alpha^*\label{eq:tr1}\\
T_2 &=(\alpha^*)^2 +2\alpha\label{eq:tr2}\\
T_3 &= -|\alpha|^2 + \alpha-\alpha^* = -|\alpha|^2 +2\mathrm{Im}(\alpha)\label{eq:tr3}
\end{align}
We will show that these cannot be the characters of the elements of a three-dimensional irrep of $\Sigma(60)$, $\Sigma(60) \times \mathbb{Z}_3$, $\Sigma(168)$, $\Sigma(168) \times \mathbb{Z}_3$, $\Sigma(216)$, $\Sigma(36\times 3)$, $\Sigma(216\times 3)$, or $\Sigma(360\times 3)$. 

There are two three-dimensional irreps of $\Sigma(60)$ up to conjugation \cite{Fairbairn1964}. \ The characters of their elements all lie in the set $\left\{ 0, -1, 3, \frac{1+\sqrt{5}}{2}, \frac{1-\sqrt{5}}{2} \right\}$. \ Note that $0<|\alpha|^2<1$, which means that $T_3$ cannot be in this set unless $T_3=\frac{1-\sqrt{5}}{2}$ and $\mathrm{Im}(\alpha)=0$. \ But then this implies $\alpha=\pm\sqrt{\frac{\sqrt{5}-1}{2}}$. \ Plugging this into $T_1$ and $T_2$, we see they are not in the set of allowed values. \ Hence $G$ is not $\Sigma(60)$.

The characters of the three-dimensional irreps of $\Sigma(60)\times\mathbb{Z}_3$ are identical to those of $\Sigma(60)$, but with the additional possibility that they can be multiplied by $e^{\frac{2\pi i}{3}}$ or $e^{\frac{4\pi i}{3}}$. The same argument as above shows that in order for $T_3$ to be in the set of allowed characters, we must have $\alpha=\pm\sqrt{\frac{\sqrt{5}-1}{2}}$ or $\alpha=\frac{1+\sqrt{5}}{8}\left(\pm 1 \pm \sqrt{3} i \right)$. Plugging these into $T_1$, we see the possible values of $T_1$ are not in the set of allowed characters, hence $G$ is not $\Sigma(60)\times \mathbb{Z}_3$.

There are two three-dimensional irreps of $\Sigma(168)$ up to conjugation \cite{Fairbairn1964}. \ The characters of their elements all lie in the set $S=\left\{0,\pm1,3, \frac{1}{2}(-1\pm i\sqrt{7})\right\}$. \ Since $0<|\alpha|^2<1$, if $T_3$ is in this set it must have value $\frac{1}{2}(-1\pm i\sqrt{7})$. \ Therefore we must have $\alpha = \pm\frac{1}{4} \pm \frac{\sqrt{7}}{4}i$. \ This implies that $\alpha^2 = \frac{-3}{8} \pm \frac{\sqrt{7}}{8} i$ and $2\alpha^* = \pm \frac{1}{2} \pm \frac{\sqrt{7}}{2}i$. \ Regardless of the signs chosen, this means that $T_1$ is not in the set $S$ of allowed values. \ Hence $G$ is not $\Sigma(168)$.

The characters of the three-dimensional irreps of $\Sigma(168)\times\mathbb{Z}_3$ are identical to those of $\Sigma(168)$, but with the additional possibility that they can be multiplied by $e^{\frac{2\pi i}{3}}$ or $e^{\frac{4\pi i}{3}}$. The same argument as above shows that in order for $T_3$ to be in the set of allowed characters, we must have $\alpha=\pm \frac{1}{4} \pm \frac{\sqrt{7}}{4} i$, $\alpha=\frac{1}{4}(\pm\sqrt{5} \pm i \sqrt{3})$, or $\alpha=\pm\frac{1}{4\sqrt{2}}\sqrt{7\sqrt{21}-13} \pm \frac{\sqrt{3}+\sqrt{7}}{8} i$. Plugging these into $T_1$, we see that the possible values of $T_1$ are not in the set of allowed characters, hence $G$ is not $\Sigma(168)\times \mathbb{Z}_3$.

There is one three-dimensional irrep of $\Sigma(216)$ up to conjugation \cite{Fairbairn1964}. \ The characters of its elements all lie in the set $\{0,-1,3\}$. \ Since $T_3$ cannot be in this set, $G$ is not $\Sigma(216)$.

There are eight three-dimensional irreps of $\Sigma(36\times 3)$ up to conjugation \cite{Hanany1999}. \ The characters of their elements all lie in the set $S=\{0, \pm1, \pm e_3, \pm e_3^2, \pm e_4, \pm e_{12}^7, \pm e_{12}^{11} \pm3, \pm3e_3, \pm 3e_3^2\}$ where $e_n = e^{\frac{2\pi i}{n}}$. \ Since $\mathrm{Re}(T_3) = -|\alpha|^2$ and $0<|\alpha|^2<1$, if $T_3 \in S$ then we must have $T_3\in\{\pm e_3, \pm e_3^2, \pm e_{12}^7, \pm e_{12}^{11} \}$. \ Solving for $\alpha$ gives us $\alpha \in \left\{\frac{\pm \sqrt{5}\pm \sqrt{3}i}{4}, \frac{\pm\sqrt{8\sqrt{3}-1}\pm i}{4} \right\}$. \ A straightforward evaluation of possible values of $T_1$ shows $T_1 \notin S$. \ So $T_1$ and $T_3$ cannot be characters of these irreps, and hence $G$ is not $\Sigma(36\times 3)$.

There are seven three-dimensional irreps of $\Sigma(216\times 3)$ up to conjugation \cite{Hanany1999}. \ The characters of their elements all lie in the set \[S=\{0, \pm1, 3, \pm e_3,\pm e_3^2, -e_9^2, -e_9^4, -e_9^5, -e_9^7, \pm e_9^2 +e_9^5, 2e_9^2 +e_9^5, -e_9^2 -2e_9^5, e_9^4+e_9^7, e_9^4 + 2e_9^7, -2e_9^4-e_9^7 \}.\]
If $T_3 \in S$, then for each case we can solve for $\alpha$ and hence $T_1$. \ As above, a straightforward calculation shows that for no $T_3 \in S$ do we have $T_1 \in S$. \ Hence $G$ is not $\Sigma(216\times 3)$.

There are four three-dimensional irreps of $\Sigma(360\times 3)$ up to conjugation \cite{Hanany1999}. \ The characters of their elements all lie in the set \[S=\{0, \pm1, \pm e_3,\pm e_3^2, 3e_3,3e_3^2, -e_5 -e_5^4, -e_5^2 -e_5^3,-e_{15} -e_{15}^4 ,-e_{15}^7 -e_{15}^{13} ,-e_{15}^{11} -e_{15}^{14} ,-e_{15}^2 -e_{15}^8 \}.\]
Again a straightforward calculation shows that for no $T_3 \in S$ do we have $T_1 \in S$. \ Hence $G$ is not $\Sigma(360\times 3)$.

We have therefore shown that $G_M$ is not an irrep of an exceptional finite subgroup of $SU(3)$.

Next we will show that $G_M$ is not in one of the two infinite families of ``dihedral-like" finite subgroups of $SU(3)$, known as the C-series and the D-series groups. \ The most well-known members of these series are $\Delta(3n^2)$ and $\Delta(6n^2)$, labeled by $n \in \mathbb{N}$, which consist of all 3 by 3 even permutation matrices (for $\Delta(3n^2)$) or all 3 by 3 permutation matrices (for $\Delta(6n^2)$) whose entries are replaced by $n$th roots of unity. \ In early works describing subgroups of $SU(3)$, such as Fairbairn, Fulton, and Klink \cite{Fairbairn1964}, only  $\Delta(3n^2)$ and $\Delta(6n^2)$ appear as elements of these series. \ However in 2011, Ludl \cite{Ludl2011} pointed out that there exist nontrivial subgroups of $\Delta(3n^2)$ and $\Delta(6n^2)$ which are missing from these references. \ Fortunately these groups have now been fully classified \cite{GrimusLudl2014}, and sufficient constraints have been placed on their representations \cite{GrimusLudl2012} that we can eliminate the possibility that $G$ is an irrep of any C or D-series group.

In the following, we first eliminate the possibility that $G_M$ is an irrep of $\Delta(3n^2)$ or $\Delta(6n^2)$ following the work of Fairbairn, Fulton, and Klink \cite{Fairbairn1964}. \ Afterwards we show that these arguments suffice to prove $G_M$ is not an irrep of any of the C-series or D-series groups, using the work of Grimus and Ludl \cite{GrimusLudl2012}.

The three-dimensional irreps of $\Delta(3n^2)$ are labeled by integers $m_1,m_2 \in\{0, \ldots, n-1\}$, and have conjugacy classes labeled by types $A,C,E$ and numbers $p,q \in\{0, \ldots, n-1\}$. \ The respective characters are either $0$ for conjugacy classes $C(p,q)$ and $E(p,q)$ or
\begin{equation}\label{eq:3n2}e^{\frac{2\pi i}{n}\left(m_1 p +m_2 q\right)}+e^{\frac{2\pi i}{n}\left(m_1 q -m_2(p+ q)\right)}+e^{\frac{2\pi i}{n}\left(-m_1( p+q) +m_2 p\right)}\end{equation}
for conjugacy class $A(p,q)$.

Assume that $G_M$ is an irrep of $\Delta(3n^2)$ for some $n$---we will derive a contradiction shortly. \ Then the trace of each $R_i$ must be zero (if $R_i$ is a representative of type $C$ or $E$) or of the form of Eq. (\ref{eq:3n2}) (if $R_i$ is a representative of type $A$). \ However, we can show that none of the traces $T_i$ can be $0$ because our beamsplitter is nontrivial. \ Indeed $T_3$ cannot be zero as $0<|\alpha|^2<1$. \ We know that in order for $T_1$ to be zero, we need $\alpha^2 = 2\alpha^*$, which implies $|\alpha|=2$ which is not possible, and likewise with $T_2$. \ Hence each $T_i$ must have the form of Eq. (\ref{eq:3n2}), which implies each $R_i$ is in conjugacy class $A(p_i,q_i)$ for some choice of $p_i,q_i$. \ However, looking at the multiplication table for this group provided in \cite[Table VIII]{Fairbairn1964}, we have that $A(p,q)A(p',q')=A\left(p+q \mod{n},p'+q' \mod{n}\right)$. \ Hence the $T_i$'s cannot possibly generate all of $\Delta(3n^2)$ for any $n$, since they cannot generate elements in the conjugacy classes $C(p,q)$ or $E(p,q)$. \ This contradicts our assumption that the $R_i$'s generate an irrep of $\Delta(3n^2)$. \ Therefore $G_M$ is not an irrep of $\Delta(3n^2)$ for any $n$.

We now extend this argument to eliminate the possibility that $G_M$ is an irrep of any of the C-series groups. \ In Appendix E of \cite{GrimusLudl2012}, Grimus and Ludl show that for any three-dimensional irrep of a C-series group, there exists a basis (and an ordering of that basis) in which all elements of the $A$ conjugacy classes are represented by diagonal matrices and $E(0,0)$ is represented by 
\begin{equation} \label{eq:Ematrix}
\left(\begin{matrix}0 & 1 & 0 \\ 0 & 0 &1 \\ 1 & 0 & 0 \end{matrix}\right).
\end{equation}
From this it can be easily shown that in any three-dimensional irrep of a C-series group, all elements of types $C$ and $E$ are represented by traceless matrices.\footnote{To see this, note that by the group multiplication table in [13] Table VIII, we have that $A(p,q)=E(p,q)E(0,0)E(0,0)$, so $A(p,q)$ is in a D-series group if and only if $E(p,q)$ is in the group. Additionally, since $A(p,q)E(0,0)=E(p,q)$, all elements of type $E$ are obtainable by multiplying an element of type $A$ by $E(0,0)$. Since in this basis the $A$ matrices are diagonal, and $E(0,0)$ is represented by the above matrix (\ref{eq:Ematrix}), this implies the claim for elements of type $E$. A similar argument holds for the elements of type $C$.} In our previous arguments eliminating $\Delta(3n^2)$ as a possibility, we showed that none of the generators $R_i$ can be traceless, so each of the $R_i$ must be of type $A$. \ Again this is a contradiction since elements of type $A$ generate an abelian group and $G_M$ is nonabelian. \ Hence $G_M$ cannot be an irrep of one of the C-series groups.

Next we turn our attention to the D-series finite subgroups of $SU(3)$. We begin by showing that $G_M$ cannot be an irrep of $\Delta(6n^2)$ for any $n$, and we will later generalize this to eliminate all D-series groups as possbilities. \ The group $\Delta(6n^2)$ contains $6$ families of conjugacy classes, labeled by types $A,B,C,D,E,F$ and by integers $p,q$ as above. \ The three-dimensional irreps of $\Delta(6n^2)$ are again labeled by $(m_1,m_2)$, which now take values in $(m,0),(0,m)$ or $(m,m)$, as well as $t\in\{0,1\}$. \ The character of each element is
\begin{align}
\Tr(A(p,q))&= e^{\frac{2\pi i}{n}\left(m_1 p +m_2 q\right)}+e^{\frac{2\pi i}{n}\left(m_1 q -m_2(p+ q)\right)}+e^{\frac{2\pi i}{n}\left(-m_1( p+q) +m_2 p\right)} \label{eq:6n2A}\\
\Tr(B(p,q))&=(-1)^t e^{\frac{2\pi i}{n}\left(m_1 p + m_2 q\right)} \\
\Tr(D(p,q))&=(-1)^t e^{\frac{2\pi i}{n}\left(m_1(\frac{n}{2}-p-q)  +m_2 p \right)} \\
\Tr(F(p,q))&=(-1)^t e^{\frac{2\pi i}{n}\left(m_1 q +m_2(\frac{n}{2}-p-q)\right)} \\
\Tr(C(p,q))&=\Tr(E(p,q))=0 \label{eq:6n2C}
\end{align}

We now eliminate the possibility that $G_M$ is an irrep of $\Delta(6n^2)$ for any $n$. Again assume by way of contradiction that $G_M$ is an irrep of $\Delta(6n^2)$ for some $n$. \ Then each $R_i$ must be in one of the types $A,B,C,D,E,F$, and each trace $T_i$ must have the corresponding character from Eqs. (\ref{eq:6n2A})-(\ref{eq:6n2C}). \ As noted previously each $T_i$ cannot be $0$, so in fact each $R_i$ must be of type $A,B,D$ or $F$. \ Furthemore, we will show the following Lemma:

\begin{lem} If $G_M $ is an irrep of $\Delta(6n^2)$, then all $R_i$ of types $B$, $D$ or $F$ are of the same type. \label{lem:6n2}
\end{lem}

By Lemma \ref{lem:6n2}, some of the $R_i$'s belong to a single type $B$, $D$ or $F$ while the remaining $R_i$'s are of type $A$. \ However, by examining the multiplication table for this group provided in \cite[Table VIII]{Fairbairn1964}, one can see that any number of elements of type $A$ plus any number of elements from a single type $B$, $D$, or $F$ cannot generate the entire group. \ This contradicts our assumption that the $R_i$'s generate an irrep of $\Delta(6n^2)$. \ Hence $G_M$ is not an irrep of $\Delta(6n^2)$ so $G$ cannot be $\Delta(6n^2)$ by Claim \ref{clm:irrep}.

We now prove Lemma \ref{lem:6n2} before continuing the proof of Claim \ref{inf}.

\begin{proof}[Proof of Lemma \ref{lem:6n2}]

Assume that $G_M$ is an irrep of $\Delta(6n^2)$. \ We will show that all of the $R_i$'s of types $B$, $D$ or $F$ are of the same type. \ We proceed by enumerating all pairs $R_i$, $R_j$ for $i\neq j$ and show that it is not possible for both $R_i$ and $R_j$ to be of distinct types $B$, $D$ or $F$.

Let $\alpha=a+bi$ where $a$ and $b$ are real. \ If $R_i$ is of type $B,D$ or $F$, then $T_i$ has norm $1$, which imposes the following equations on $a$ and $b$:
\begin{align}
|T_1|^2=1 \Rightarrow (a^2+b^2)^2 +4[a^2(1-a)+b^2(3+a)] =1 \label{eq:t1norm1}\\
|T_2|^2=1 \Rightarrow (a^2+b^2)^2 +4[a^2(1+a)+b^2(1-3a)] =1\label{eq:t2norm1}\\
|T_3|^2=1 \Rightarrow (a^2+b^2)^2 +4b^2=1\label{eq:t3norm1}
\end{align}

First suppose that $R_1$ and $R_2$ are members of distinct types $B$, $D$, or $F$. \ Then $|T_1|=|T_2|=1$. \ The only solutions to Eqs. (\ref{eq:t1norm1}) and (\ref{eq:t2norm1}) in which $0<|\alpha|^2=a^2+b^2<1$ are $\left(a=0, b=\pm\sqrt{\sqrt{5}-2}\right)$ and $\left(a=\pm\frac{1}{2}\sqrt{3(\sqrt{5}-2)},b=\pm\frac{1}{2}\sqrt{\sqrt{5}-2}\right)$. \ Note also that the product $R_1 R_2$ must be of type $C$ or $E$ according to the group multiplication table in \cite[Table VIII]{Fairbairn1964}. \ Hence the trace of $R_1 R_2$ must be $0$ if $G_M$ is an irrep of $\Delta(6n^2)$. \ This implies that
\begin{equation}
\Tr(R_1 R_2)= \alpha^3 - {\alpha^*}^3+|\beta|^2\left(1+\beta +\beta^* -|\alpha|^2\right) - |\alpha|^2=0 \label{eq:trr1r2}
\end{equation}
Since we have $\alpha=a+bi$ where the values of $a$ and $b$ are one of the six possibilities above, one can see that there is no $\beta$ which satisfies Eq. (\ref{eq:trr1r2}). \ Indeed, note that $\alpha^3 - {\alpha^*}^3$ is nonzero and pure imaginary, while the rest of the expression is real, so the terms in Eq. (\ref{eq:trr1r2}) cannot sum to zero. \ This provides the desired contradiction. \ We conclude that $R_1$ and $R_2$ cannot be of distinct types $B$, $D$, or $F$.

Next suppose that $R_1$ and $R_3$ are of distinct types $B$, $D$ or $F$. \ Then $|T_1|=|T_3|=1$. \ If $\alpha=a+bi$ as before, Eqs. (\ref{eq:t1norm1}) and (\ref{eq:t3norm1}), combined with the fact that $0<|\alpha|^2=a^2+b^2<1$, imply that $a=0$ and $b=\pm\sqrt{\sqrt{5}-2}$. \ Again, using the group multiplication table in \cite[Table VIII]{Fairbairn1964} we must have that $R_1R_3$ is of type $C$ or $E$ so
\begin{equation}
\Tr(R_1 R_3) = \alpha^3 +\alpha^*|\alpha|^2 +{\alpha^*}^2 +|\beta|^2(1+\beta+\beta^*+\alpha^2)=0 \label{eq:trr1r3}
\end{equation}
Since $\alpha = \pm i \sqrt{\sqrt{5}-2}$, this is a contradiction---for the terms $\alpha^3 +\alpha^*|\alpha|^2$ of Eq. (\ref{eq:trr1r3}) are nonzero and pure imaginary while the remaining terms are real. \ Hence $R_1$ and $R_3$ cannot be of distinct types $B$, $D$, or $F$.

Finally suppose that $R_2$ and $R_3$ are of distinct types $B$, $D$, or $F$. \ Then $|T_2|=|T_3|=1$. \ If $\alpha=a+bi$ then the only solutions to Eqs. (\ref{eq:t2norm1}) and (\ref{eq:t3norm1}) in which $0<|\alpha|^2=a^2+b^2<1$ are $\left(a=0,b=\pm\sqrt{\sqrt{5}-2}\right)$ and $\left(a\approx 0.437668, b\approx\pm0.457975\right)$. \ Furthermore using the group multiplication table in \cite[Table VIII]{Fairbairn1964} we must have that $R_2R_3$ is of type $C$ or $E$ so
%a = 1/12 (2-59/(305+18 sqrt(921))^(1/3)+(305+18 sqrt(921))^(1/3)),   b = -1/(4 sqrt((3 (2-59/(305+18 sqrt(921))^(1/3)+(305+18 sqrt(921))^(1/3)))/(-162+6962/(305+18 sqrt(921))^(2/3)+767/(305+18 sqrt(921))^(1/3)-13 (305+18 sqrt(921))^(1/3)+2 (305+18 sqrt(921))^(2/3))))
\begin{equation}
\Tr(R_2 R_3) = \alpha^2 -{\alpha^*}^3 -\alpha |\alpha|^2 +|\beta|^2(\alpha \beta^* - \alpha^* \beta^* -2\alpha^*)=0 \label{eq:trr2r3}
\end{equation}
With slightly more work, one can again check that Eq. (\ref{eq:trr2r3}) cannot be satisfied with the above values of $\alpha$, under the additional constraint that $|\alpha|^2+|\beta|^2=1$, providing the desired contradiction.  Hence $R_2$ and $R_3$ cannot be of distinct types $B$, $D$, or $F$, which completes the proof of Lemma \ref{lem:6n2}.

\end{proof}

We have therefore eliminated the possiblity that $G_M$ is an irrep of $\Delta(6n^2)$ for any $n$, and so $G \neq \Delta(6n^2)$ by Claim \ref{clm:irrep}.

We now extend this argument to eliminate the possibility that $G_M$ is an irrep of any of the D-series groups. \ In Appendix G of \cite{GrimusLudl2012}, Grimus and Ludl show that for any three-dimensional irrep of a D-series group, there exists a basis (and an ordering of that basis) in which all elements of the $A$ conjugacy classes are represented by diagonal matrices, and $E(0,0)$ and $B(0,0)$ are represented by  
\begin{equation} \label{eq:EBmatricesDgroups}
\begin{matrix}
E(0,0) \rightarrow \left(\begin{matrix}0 & 1 & 0 \\ 0 & 0 &1 \\ 1 & 0 & 0 \end{matrix}\right) & B(0,0) \rightarrow \pm \left(\begin{matrix}1 & 0 & 0 \\ 0 & 0 &1 \\ 0 & 1 & 0 \end{matrix}\right).
\end{matrix}
\end{equation}
From this it can be easily shown that in any three-dimensional irrep of a D-series group, all elements of types $C$ and $E$ are represented by traceless matrices, and all elements of types $B$, $D$ and $F$ are represented by matrices whose trace has unit norm.\footnote{The fact that matrices representing elements of types $C$ and $E$ are traceless follows from the previous arguments regarding C-series groups. The fact that matrices representing elements of types $B$, $D$ and $F$ have traces of norm 1 follows by an identical argument since $A(p,q)=B(p,q)B(0,0)$, $A(p,q)B(0,0)=B(p,q)$, $A$ matrices have diagonal representatives, and $B(0,0)$ is represented by the above in this basis. A similar argument holds for elements of types $D$ and $F$.} In our previous arguments eliminating $\Delta(6n^2)$ as a possibility, we showed none of our generators $R_i$ can be traceless, and those of trace norm 1 are of the same type. Hence some of our generators are of type $A$ and the remainder are of a single type $B$, $D$ or $F$. Again this is a contradiction since any number of elements of type $A$ and any number of elements of a single type $B$, $D$ or $F$ do not suffice to generate any D-series group - in particular they cannot generate $E(0,0)$. Hence $G_M$ cannot be an irrep of any of the D-series groups. This concludes the proof that $G_M$ cannot be an irrep of any of the “dihedral-like” subgroups of $SU(3)$.

Finally we will show that $G$ is not a finite subgroup of $SU(2)$. \ Since $SU(2)$ is a double cover of $SO(3)$, if $G$ is a finite subgroup of $SU(2)$, then $G$ must be either a finite subgroup of $SO(3)$ or else the double cover of such a subgroup.
\ We first eliminate the finite subgroups of $SO(3)$. \ The dihedral and cyclic subgroups have no three-dimensional irreps; hence $G$ cannot be one of these by Claim \ref{clm:irrep}. \ The icosahedral subgroup is isomorphic to $\Sigma(60)$ so has already been eliminated. \ The octahedral and tetrahedral subgroups do have three-dimensional irreps. \ However, the characters of their elements all lie in the set $\{0,\pm1,\pm3\}$, so these can be eliminated just as the exceptional groups of $SU(3)$ were eliminated.

Now all that remains are double covers of the finite subgroups of $SO(3)$. \ The binary dihedral groups, also known as the dicyclic groups, have no three-dimensional irreps, so $G$ cannot be a binary dihedral group by Claim \ref{clm:irrep}. \ The binary tetrahedral group has one three-dimensional irrep, with character values in the set $\{0,\pm1,\pm3\}$. \ So $T_3$ cannot be in this set as noted above.

The binary octahedral group has two three-dimensional irreps, with character values also in $\{0,\pm1,\pm3\}$, so is likewise eliminated. \ The binary icosahedral group has two three-dimensional irreps, with all characters in the set $\{0,-1,3,\frac{\sqrt{5}\pm1}{2}\}$. \ As discussed in the case of $\Sigma(60)$, our traces cannot take these values.

In summary, by enumeration of the finite subgroups of $SU(3)$, we have shown that $G$ cannot be finite.
\end{proof}

\begin{cor} \label{cor:GLie} $G$ is a continuous (Lie) subgroup of $SU(3)$.
\end{cor}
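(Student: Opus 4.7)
The plan is to invoke Cartan's closed-subgroup theorem (i.e., the theorem that every topologically closed subgroup of a Lie group is itself an embedded Lie subgroup) applied to the topological closure of $G$ inside $SU(3)$. Since $SU(3)$ is itself a compact Lie group, this is a natural and clean way to upgrade ``closed subgroup'' to ``Lie subgroup.''

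The first step is to pass to $\overline{G}$, the closure of $G$ in $SU(3)$. This is automatically a closed subgroup of $SU(3)$ (closures of subgroups of topological groups are subgroups), and by Cartan's theorem it is therefore a Lie subgroup. Moreover, in the dense-generation setup of the paper, $G$ and $\overline{G}$ determine the same object for our purposes, since $G$ was already defined up to topological closure by declaring that the products of $R_1,R_2,R_3$ and their inverses are dense in $G$.

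The second step is to rule out the possibility that $\overline{G}$ has dimension zero. A zero-dimensional Lie subgroup of $SU(3)$ is a closed discrete subset of the compact space $SU(3)$, hence finite. If $\overline{G}$ were finite, then so would be $G \subseteq \overline{G}$, contradicting Claim \ref{inf}. Therefore $\overline{G}$ has positive dimension, i.e.\ it is a continuous (Lie) subgroup of $SU(3)$, establishing the corollary.

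There is no real obstacle here beyond correctly quoting Cartan's closed-subgroup theorem and the elementary fact that a discrete closed subset of a compact Hausdorff space is finite; both are standard. The heavy lifting (showing $G$ is infinite) has already been done in Claim \ref{inf}, and this corollary is just the topological packaging of that result so that the subsequent Lie-algebraic analysis of $\overline{G}$ can proceed.
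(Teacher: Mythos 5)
Your proposal is correct and follows essentially the same route as the paper: both observe that $G$ is closed by construction (it is defined via dense generation), invoke Cartan's closed-subgroup theorem, and use Claim \ref{inf} to rule out the finite (zero-dimensional) case. Your version is slightly more explicit about why a zero-dimensional closed subgroup of the compact group $SU(3)$ must be finite, but this is just a careful spelling-out of the same argument.
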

\begin{proof}
$G$ is infinite by Claim \ref{inf}. \ Furthermore $G$ is closed because it is the set of matrices \emph{densely} generated by $\{R_1,R_2,R_3\}$. \ It is well-known that a closed, infinite subgroup of a Lie group is also a Lie group (this is Cartan's theorem \cite{Cartan1952}). \ The corollary follows.
\end{proof}

Next we show that $G$ must be either $SO(3)$, $SU(2)$ or $SU(3)$. \
Furthermore, the set of matrices $G_M$ densely generated by $\{R_1,R_2,R_3\}$
consists of either all $SO(3)$ matrices or all $SU(3)$ matrices.

\begin{claim}
$G$ is either $SO(3)$, $SU(2)$, or $SU(3)$. \ Furthermore, $G_M$ consists of either all $3 \times 3$ special unitary matrices (if the beamsplitter $b$ has a non-real entry), or all $3 \times 3$ special orthogonal matrices (if $b$ is real). \label{GMso3}
\end{claim}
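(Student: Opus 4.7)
The plan is to classify $G$ by analyzing its identity component $G^0$, using Claim \ref{clm:irrep} together with the classification of closed connected subgroups of $SU(3)$. By Corollary \ref{cor:GLie}, $G$ is a closed Lie subgroup of $SU(3)$ of positive dimension, so $G^0$ is a nontrivial closed connected subgroup. Since $G^0\triangleleft G$ and $G/G^0$ is compact and discrete (hence finite), $G$ permutes the $G^0$-isotypic summands of $\mathbb{C}^3$. By irreducibility of $G$ and the fact that $3$ is prime, either $G^0$ already acts irreducibly, or $\mathbb{C}^3$ decomposes as $1\oplus 1\oplus 1$ under $G^0$, in which case $G^0$ lies in a maximal torus of $SU(3)$.

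I rule out the torus case by direct inspection. If $G^0\subseteq T$ for a maximal torus $T$, then $G$ normalizes $T$ and sits inside $N_{SU(3)}(T)$; every element of $G$ is then a diagonal matrix times a permutation matrix in one fixed orthonormal basis. The three $R_i$ cannot simultaneously have this form for any nontrivial $b$: $R_1$ alone mixes its first two rows in a way incompatible with being a scaled permutation, and together with $R_2$, $R_3$ there is no common eigenbasis up to coordinate permutation. Hence $G^0$ acts irreducibly, and by the standard classification of irreducible closed connected subgroups of $SU(3)$ (equivalently, Dynkin's classification of irreducible Lie subalgebras of $\mathfrak{sl}(3)$), the only possibilities are $G^0=SO(3)$, embedded as the real orthogonal matrices, or $G^0=SU(3)$ itself.

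If $G^0=SU(3)$ then immediately $G=SU(3)=G_M$. Otherwise $G^0=SO(3)$, and $G$ lies in its normalizer, which equals $SO(3)\cdot Z(SU(3))=\bigcup_{k=0}^{2}\omega^{k}SO(3)$ where $\omega=e^{2\pi i/3}$ (by Schur applied to the irreducible $SO(3)$-action). Now I split on whether $b$ has a non-real entry. If $b$ is real, every $R_i$ is real, so $G\subseteq O(3)\cap SU(3)=SO(3)$ and thus $G=SO(3)=G_M$. If $b$ has a non-real entry, I claim $G^0\neq SO(3)$: the $(2,3)$ entry of $R_1$ is $|\beta|^2>0$, a positive real number, so the only coset $\omega^{k}SO(3)$ that could contain $R_1$ is $SO(3)$ itself; but then all entries of $R_1$ must be real, which forces $\alpha$ and $\beta$ both real, contradicting the hypothesis. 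Hence $G^0=SU(3)$ and $G_M=SU(3)$, completing the claim.

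I expect the main obstacle to be invoking the classification of irreducible closed connected subgroups of $SU(3)$ cleanly, since this is a standard fact but is rarely stated in exactly this form; the torus exclusion in the first paragraph is a concrete computation but one must take some care to cover every conjugate of the standard torus. The real-versus-complex split at the end, and the identification $N_{SU(3)}(SO(3))=SO(3)\cdot Z(SU(3))$, are both short finishing steps once the Lie-theoretic classification is in hand.
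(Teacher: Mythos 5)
Your overall strategy (pass to the identity component $G^0$, apply Clifford theory to see that either $G^0$ acts irreducibly or lies in a maximal torus, then invoke the classification of irreducible closed connected subgroups of $SU(3)$) is a legitimate alternative to the paper's route, but two of your steps are asserted rather than proved, and they are exactly where the real work lies. First, the torus case. If $G^0$ lies in some maximal torus $T$, then $G\leq N_{SU(3)}(T)$, i.e.\ $G$ consists of monomial matrices \emph{in some unknown basis}. Your stated reason for excluding this---that $R_1$ ``mixes its first two rows in a way incompatible with being a scaled permutation'' and that there is ``no common eigenbasis up to coordinate permutation''---inspects entries in the computational basis, which says nothing about conjugates of the standard torus. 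The basis-independent route is through traces: a monomial special unitary has trace $0$, or of unit modulus, or a sum of three unit-modulus numbers, and ruling out a monomial realization of $R_1,R_2,R_3$ is precisely the paper's exclusion of $\Delta(3\infty)$ and $\Delta(6\infty)$, which consumes the trace computations $T_1=\alpha^2-2\alpha^*$, $T_2=\alpha^{*2}+2\alpha$, $T_3=-|\alpha|^2+2\mathrm{Im}(\alpha)$, the observation that none can vanish, and the case analysis of Lemma \ref{lem:6n2}. You cannot skip this.

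Second, the final coset argument has the same basis-dependence error. Dynkin's classification gives $G^0$ conjugate to the real $SO(3)$ by some unknown unitary $V$, not equal to it; hence $G\subseteq V\bigl(\bigcup_k \omega^k SO(3)\bigr)V^{-1}$, and the $(2,3)$ entry of $R_1$ being the positive real $|\beta|^2$ is not a conjugation-invariant statement, so it cannot force $R_1$ into the real coset. The invariant to use is the trace: $\Tr(R_1)\in\omega^k\mathbb{R}$ for some $k$, which handles non-real $\alpha$ after disposing of the cosets $k=1,2$, but fails outright when $\alpha$ is real and only $\beta$ is non-real (then $\Tr(R_1)=\alpha^2-2\alpha$ is real). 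That residual case is why the paper computes $\mathrm{Im}\bigl(\Tr(R_1R_2R_3R_1)\bigr)=|\beta|^4(\beta^{*2}+2\beta)$ and shows it is nonzero for nontrivial non-real $\beta$. The paper avoids both of your gaps by listing the nonabelian closed Lie subgroups of $SU(3)$ directly (connected: $SU(3)$, $SU(2)\times U(1)$, $SU(2)$, $SO(3)$; disconnected: the dihedral-like families), reusing its finite-group trace arguments in the $n\to\infty$ limit, and then using the unitary invariance of the trace for the real-versus-complex split. Your skeleton could be completed, but only by importing essentially those same computations.
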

\begin{proof}

Since $R_1$, $R_2$, and $R_3$ do not commute, $G$ is nonabelian. \ By Corollary \ref{cor:GLie}, we know $G$ is a Lie group, and furthermore $G$ is closed. \ The nonabelian closed connected Lie subgroups of $SU(3)$ are well-known \cite{Brocker2003}: they are $SU(3)$, $SU(2)\times U(1)$, $SU(2)$, and $SO(3)$. \ Meanwhile, the closed disconnected Lie subgroups of $SU(3)$ are $\Delta(3\infty)$ and $\Delta(6\infty)$, as described in \cite{Fairbairn1964}.

Note that $\Delta(3\infty)$ and $\Delta(6\infty)$ are the analogues of $\Delta(3n^2)$ and $\Delta(6n^2)$ as $n\rightarrow \infty$. \ Our above arguments showing that $G\neq \Delta(3n^2)$ and $G\neq \Delta(6n^2)$ carry over in this limit, because at no point did we use the fact that $n$ or $m$ were finite. \ Therefore $G$ cannot be either of these continuous groups.

By Claim \ref{clm:irrep}, $G$ has a three-dimensional irrep. \ Of the remaining groups, only $SU(2)$, $SO(3)$, and $SU(3)$ have three-dimensional irreps. \ Furthermore, it is well known that the only three-dimensional irrep of $SU(2)$ is as $SO(3)$. \ This is because $SU(2)$ has exactly one irrep in each finite dimension (See \cite[Section II.5]{Brocker2003} or \cite{Yuan} for details), and $SU(2)$ has an obvious representation as $SO(3)$ via the fact that $SU(2)$ is a double cover of $SO(3)$. \ Since we are only concerned with the set of matrices $G_M$ generated, without loss of generality we can assume $G$ is either $SO(3)$ or $SU(3)$.

It is well-known that the only three-dimensional irrep of $SU(3)$ is the natural one, as the group of all $3\times 3$ special unitary matrices (\cite[Section VI.5]{Brocker2003}). \ Likewise, the only three-dimensional irrep of $SO(3)$ is the natural one, up to conjugation by a unitary \cite{Brocker2003}. \ Hence $G_M$ consists of either all $3 \times 3$ special unitary matrices (case A), or all $3 \times 3$ special orthogonal matrices conjugated by some unitary $U$ (case B).

We now show that if the beamsplitter $b$ is real, then we are in case B and without loss of generality the conjugating unitary $U$ is real. \ Hence $G_M$ is the set of all $3\times 3$ orthogonal matrices. \ Otherwise, if $b$ has a complex entry, we will show we are in case A and $G_M$ is the set of all $3\times 3$ special unitary matrices.

First, suppose $b$ is real. \ Then all matrices in our generating set are orthogonal, so all matrices in $G_M$ are orthogonal. \ Hence we are in case $B$, and since all matrices in $G_M$ are real, without loss of generality $U$ is a real matrix as well.

Now suppose that $b$ has a complex entry. \ Then either $\alpha$ or $\beta$ are not real. \ First, suppose $\alpha$ is not real. \ Then $\mathrm{Tr}(R_1) = \alpha^2 - 2\alpha^*$ is not real because $0<|\alpha|<1$. \ But since conjugating a matrix by a unitary preserves its trace, and we are in case $B$, the traces of all matrices in $G_M$ must be real. \ In particular $\mathrm{Tr}(R_1)$ must be real, which is a contradiction. \ Therefore if $\alpha$ is not real then we must be in case A.

Next, suppose $\beta$ is not real. \ Then we can obtain a similar contradiction. \ Let $\beta = p+qi$ where $p$ and $q$ are real. \ By direct calculation one can show that $\mathrm{Im}\left(\mathrm{Tr}(R_1 R_2 R_3 R_1)\right) = |\beta|^4\left(\beta^{*2} +2\beta \right)$. \ Since our beamsplitter is nontrivial, $|\beta|^4 \neq 0$, so this quantity is $0$ if and only if $\beta^{*2}+2\beta=0 \Leftrightarrow 2q(1-p)=0$. \ But this cannot occur, since $q\neq0$ (because $\beta$ is not real), and $1-p\neq 0$ (because the beamsplitter is nontrivial). \ Hence in this case $\mathrm{Tr}(R_1 R_2 R_3 R_1)$ is imaginary, which contradicts the fact we are in case B. \ Therefore if $\beta$ is not real then we must be in case A, which completes the proof.
\end{proof}

Theorem \ref{o3_generation} follows from Claim \ref{GMso3}. \ Having proved
our main result, we can now easily show two alternative versions of the
theorem as well.

\begin{cor}
Any nontrivial two-mode optical gate $g=\bigl(\begin{smallmatrix}a & b \\ c & d\end{smallmatrix}\bigr)$ (not necessarily of determinant $-1$), plus the set of all phaseshifters densely generates $SU(m)$ on $m\geq 3$ modes.
\end{cor}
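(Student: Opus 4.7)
My plan is to reduce to Theorem \ref{o3_generation} in two stages: first I use phaseshifters to turn $g$ into a nontrivial beamsplitter, and then, in the case where Theorem \ref{o3_generation} only yields $SO(3)$, I use the phaseshifters again to enlarge the group to $SU(3)$.

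For the first stage, write $\det(g) = e^{i\phi}$ and form $g' = g \cdot \mathrm{diag}(1, e^{i(\pi - \phi)})$. Since multiplying a column of $g$ by a unit scalar preserves the magnitudes of its entries, $g'$ remains a nontrivial $2 \times 2$ unitary, and by construction $\det(g') = -1$, so $g'$ is a nontrivial beamsplitter. Each three-mode embedding $g'_{ij}$ is $g_{ij}$ composed with a phaseshifter, so the set $\{g_{ij}\} \cup \{\text{phaseshifters}\}$ generates a group containing $\{g'_{12}, g'_{13}, g'_{23}\}$. By Theorem \ref{o3_generation}, the group densely generated by those three elements is either $SU(3)$ (if $g'$ has a non-real entry) or $SO(3)$ (if $g'$ is real). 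In the complex case, combining with phaseshifters on additional modes and Reck et al.\ (as in the proof of Corollary \ref{cor:universality}) gives dense generation of $SU(m)$ for all $m \ge 3$, finishing this case.

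For the real case, I would take any single phaseshifter $P = \mathrm{diag}(e^{i\theta}, 1, 1)$ with $e^{i\theta} \notin \mathbb{R}$ and show that the closed subgroup $H \le SU(3)$ densely generated by $SO(3) \cup \{P\}$ equals $SU(3)$. By Cartan's theorem $H$ is a closed Lie subgroup of $SU(3)$. Under the adjoint action of $SO(3)$, the Lie algebra $\mathfrak{su}(3)$ decomposes as $\mathfrak{so}(3) \oplus V$, where $V$ is the $5$-dimensional space of traceless purely imaginary symmetric matrices, which carries the spin-$2$ (hence irreducible) representation of $\mathfrak{so}(3)$. Consequently, the only Lie subalgebras of $\mathfrak{su}(3)$ containing $\mathfrak{so}(3)$ are $\mathfrak{so}(3)$ itself and $\mathfrak{su}(3)$. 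Since $P \notin SO(3)$, the Lie algebra of $H$ strictly contains $\mathfrak{so}(3)$, forcing $H = SU(3)$. A final application of Reck et al.\ then extends dense generation from $SU(3)$ to $SU(m)$ for any $m \ge 3$.

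The main obstacle is the real case in the previous paragraph: one must rule out that adjoining a phaseshifter to $SO(3)$ lands us in some intermediate closed Lie subgroup rather than all of $SU(3)$. The representation-theoretic fact that the $SO(3)$-complement of $\mathfrak{so}(3)$ inside $\mathfrak{su}(3)$ is the irreducible spin-$2$ representation supplies a short, clean resolution, and everything else in the proof is bookkeeping around Theorem \ref{o3_generation} and Corollary \ref{cor:universality}.
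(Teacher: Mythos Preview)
Your first stage is essentially the paper's argument: the paper composes $g$ with a scalar phase $e^{i(\pi-\theta)/2}$ (applied via phaseshifters) rather than the one-sided diagonal you use, but in both cases one manufactures a nontrivial determinant-$(-1)$ gate and invokes Theorem~\ref{o3_generation}. The paper then simply says ``$g'$ is universal, hence this gate set is universal,'' without separately treating the case where $g'$ happens to be real; your explicit second stage is therefore more careful than the paper on this point. The Lie-algebra idea---that $\mathfrak{su}(3)=\mathfrak{so}(3)\oplus V$ with $V$ the irreducible $5$-dimensional $SO(3)$-module, so no proper intermediate subalgebra exists---is a clean way to finish.

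Two small gaps in your second stage are worth patching. First, your chosen $P=\mathrm{diag}(e^{i\theta},1,1)$ has determinant $e^{i\theta}\neq 1$, so it does not lie in $SU(3)$ and the sentence ``closed subgroup $H\le SU(3)$'' is false as written; replace $P$ by, say, $\mathrm{diag}(e^{i\theta},e^{-i\theta},1)\in SU(3)$, which is still a product of phaseshifters. Second, the inference ``$P\notin SO(3)$, hence $\mathrm{Lie}(H)\supsetneq\mathfrak{so}(3)$'' is not automatic: a priori $H$ could be disconnected with identity component $SO(3)$. The quickest fix is to observe that you actually have \emph{all} phaseshifters, hence the full connected maximal torus $T\subset SU(3)$ sits inside $H^0$, and $\mathfrak{t}\not\subset\mathfrak{so}(3)$ forces $\mathrm{Lie}(H^0)\supsetneq\mathfrak{so}(3)$ directly. (Alternatively, check that your $P$ does not normalize $SO(3)$, so $H^0\neq SO(3)$ since $H^0\trianglelefteq H$.) With either patch the argument goes through, and your route---Lie-algebra irreducibility in place of the paper's terse appeal to universality---gives a self-contained reason why adding phaseshifters to $SO(3)$ yields all of $SU(3)$.
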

\begin{proof}
Since $g$ is unitary we have $\det(g)=e^{i\theta}$ for some $\theta$. \ By composing $g$ with a phase of $e^{i\frac{\pi-\theta}{2}}$, we obtain a nontrivial beamsplitter $g'$ of determinant $-1$. \ The gate $g'$ is universal by Theorem \ref{o3_generation}, hence this gate set is universal as well.
\end{proof}

\begin{cor}
Any nontrivial two-mode real optical gate $g$ is universal for quantum linear optics.
\end{cor}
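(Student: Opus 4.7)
The plan is to reduce the corollary to Theorem \ref{o3_generation} by a case analysis on $\det(g)$. Let $g \in O(2)$ be a nontrivial real two-mode gate, interpreted as requiring all four entries of $g$ to be nonzero. If $\det(g) = -1$, then $g$ is a real reflection of the form $\bigl(\begin{smallmatrix}\cos\theta & \sin\theta \\ \sin\theta & -\cos\theta\end{smallmatrix}\bigr)$ with $\sin\theta\cos\theta \neq 0$, i.e.\ a real nontrivial beamsplitter, so Theorem \ref{o3_generation} immediately gives density in $SO(3)$, and Reck \emph{et al}.\ \cite{Reck1994} extend this to $SO(m)$ for every $m \geq 3$.

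The remaining case is $\det(g) = +1$: then $g$ is a rotation by some angle $\theta$ with $\sin\theta, \cos\theta \neq 0$, so on three modes $g_{12}, g_{13}, g_{23}$ are rotations of $\mathbb{R}^3$ by angle $\theta$ about the $z$-, $y$-, and $x$-axes respectively. I would set $H = \langle g_{12}, g_{13}, g_{23}\rangle$ and show that its closure $\overline{H}$ equals $SO(3)$, using the well-known classification of closed subgroups of $SO(3)$: besides $SO(3)$ itself, the proper closed subgroups are conjugates of $SO(2)$, conjugates of $O(2)$, and the finite subgroups $C_n$, $D_n$, $A_4$, $S_4$, $A_5$. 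I would rule out $SO(2)$ since the three generators rotate about distinct axes by a nonzero angle and hence do not pairwise commute, and rule out $O(2)$ since a rotation by $\theta \in (0, 2\pi)\setminus\{\pi\}$ about one coordinate axis fixes no line other than that axis, so the three generators share no common invariant line.

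To rule out the finite subgroups I would use the observation that $\overline{H}$ must then contain three pairwise-perpendicular rotations of a common order $n = \mathrm{ord}(e^{i\theta})$, and trace through the finite list: $C_n$ is abelian; $D_n$ has only one axis of order greater than $2$, forcing a same-order perpendicular triple to have order $2$ and thus $\theta = \pi$; $A_4$ admits only one perpendicular triple of same-order axes (the three $2$-fold face axes), forcing $\theta = \pi$; $S_4$ admits perpendicular triples via its three face axes treated as $2$- or $4$-fold, forcing $\theta \in \{\pi/2, \pi\}$; and $A_5$'s fifteen $2$-fold axes split into five perpendicular triples while its $3$- and $5$-fold axes form no such triple, so again $\theta = \pi$. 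In every case $\theta$ is forced to be a multiple of $\pi/2$, contradicting $\sin\theta\cos\theta \neq 0$. Hence $\overline{H} = SO(3)$, and one more application of Reck \emph{et al}.\ extends this to $SO(m)$ for every $m \geq 3$. The main obstacle is the bookkeeping in the finite-subgroup case: one must know the axes (and their orders) in $A_4$, $S_4$, $A_5$ well enough to rule out any exotic same-order perpendicular triple, remembering in particular that $A_5$ does contain perpendicular triples of $2$-fold rotations and that $D_2 \cong V_4$ sits inside $A_4$.
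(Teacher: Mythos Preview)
Your proposal is correct and follows essentially the same route as the paper: split on $\det(g)$, invoke Theorem~\ref{o3_generation} for the reflection case, and in the rotation case use the classification of closed subgroups of $SO(3)$ (the paper writes $U(1)$ and $U(1)\times\mathbb{Z}_2$ for your $SO(2)$ and $O(2)$) together with Reck \emph{et al.}\ to pass to $m\ge 3$ modes. Your perpendicular-triple bookkeeping for $C_n$, $D_n$, $A_4$, $S_4$, $A_5$ is a clean way to make explicit what the paper leaves as ``one can easily check.''
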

\begin{proof}
Since $g$ is real, $g$ must have determinant $\pm1$. \ The case of $\det(g)=-1$ is handled by Theorem \ref{o3_generation}, so we now prove the $\det(g)=+1$ case. \ In this case $g$ is a rotation by an angle $\theta$. \ The fact that $g$ is nontrivial means $\theta$ is not a multiple of $\pi/2$. \ The beamsplitter actions $b_{12},b_{23},b_{13}$ can be viewed as three-dimensional rotations by angle $\theta$ about the $x$, $y$ and $z$ axes. \ So the question reduces to ``For which angles $\theta$ (other than multiples of $\pi/2$) do rotations by $\theta$ about the $x$, $y$ and $z$ axes fail to densely generate all possible rotations?"

This question is easily answered using the well-known classification of closed subgroups of $SO(3)$. \ The finite subgroups of $SO(3)$ are the cyclic, dihedral, tetrahedral, octahedral, and icosahedral groups. \  One can easily check that our gate $g$ cannot generate a representation of one of these groups, and hence densely generates some infinite group $G$. \ By the same reasoning as in Corollary \ref{cor:GLie}, we conclude that $G$ is a Lie subgroup of $SO(3)$.

The Lie subgroups of $SO(3)$ are $SO(3)$, $U(1)$ (all rotations about one axis) and $U(1)\times \mathbb{Z}_2$ (all rotations about one axis, plus a rotation by $\pi$ perpendicular to the axis). \ Again one can easily eliminate the possibility that $G$ is $U(1)$ or $U(1)\times \mathbb{Z}_2$, and hence $G$ must be all of $SO(3)$.

We have proven universality on three modes for real nontrivial $g$ with determinant $+1$. \ Universality on $m\geq 3$ modes follows by a real analogue of Reck \emph{et al}.\ \cite{Reck1994}, namely that any rotation matrix in $SO(m)$ can be expressed as the product of $O(m^2)$ real $2 \times 2$ optical gates of determinant $1$.
\end{proof}

\section{Open Questions}

\label{openqs} At the moment our dichotomy theorem only holds for
beamsplitters, which act on two modes at a time and have determinant $-1$. \
As we said before, we leave open whether the dichotomy can be extended to two%
-mode gates with determinant other than $-1$. \ Although the phases of gates
are irrelevant in the qubit model, the phases unfortunately \emph{are}
relevant in linear optics---and that is the source of the difficulty. \ Note
that the previous universality result of Reck \emph{et al}.\ \cite{Reck1994} simply
assumed that arbitrary phaseshifters were available for free, so this issue
did not arise.

Another open problem is whether our dichotomy can be extended to $k$-mode
optical gates for all constants $k$. \ Such a result would complete the
linear-optical analogue of the dichotomy conjecture for standard quantum
circuits. \ The case $k=3$ seems doable because the representations of all exceptional
finite subgroups of $SU(4)$ are known \cite{Hanany2001}. \ But already the
case $k=4$ seems more difficult, because the representations of all finite
subgroups of $SU(5)$ have not yet been classified. \ Thus, a proof for
arbitrary $k$ would probably require more advanced techniques in
representation theory.

\section{Acknowledgements}

We thank Bela Bauer for pointing out Ref. \cite{Ludl2011} and Patrick Otto Ludl for helpful correspondence. A.B. was supported by the National Science Foundation Graduate Research
Fellowship under Grant No.\ 1122374 and by the Center for Science of
Information (CSoI), an NSF Science and Technology Center, under grant
agreement CCF-0939370. \ S.A. was supported by the National Science
Foundation under Grant No.\ 0844626, by a TIBCO Chair, and
by an Alan T.\ Waterman Award.

\bibliographystyle{abbrv}
\bibliography{allpapers}
% Produces the bibliography via BibTeX.

\end{document}